\newcommand{\mytilde}{\raise.17ex\hbox{$\scriptstyle\mathtt{\sim}$}}
\newtheorem{theorem}{Theorem}
\newtheorem{lemma}[theorem]{Lemma}
\theoremstyle{definition}
\newtheorem{definition}{Definition}
\theoremstyle{definition}
\DeclareRobustCommand\ket[1]{%
  \@ifnextchar\bra{\k@t{#1}\!}{\k@t{#1}}%
}
\newcommand\k@t[1]{{|{#1}\rangle}}
\begin{document}

\preprint{APS/123-QED}

\title{Distributed Quantum Error Correction with \\Permutation-Invariant Approximate Codes}

\author{Connor Clayton}
    \affiliation{Leidos, Inc.,
        4001 Fairfax Dr, Arlington, VA 22203}
    \affiliation{Joint Center for Quantum Information and Computer Science, 
        University of Maryland, College Park, MD 20742, USA}

\author{Bruno Avritzer}
    \affiliation{Leidos, Inc.,
        4001 Fairfax Dr, Arlington, VA 22203}
    \email{Bruno.Avritzer@leidos.com}

\date{\today}

%\keywords{Suggested keywords}%Use showkeys class option if keyword
                              %display desired

\begin{abstract}

Modular quantum computing architectures require error correction schemes that remain effective in the presence of noisy inter-processor operations.
We introduce a distributed quantum error correction framework based on approximate codes to address this challenge.
Our approach enables concatenation of distinct local codes across modules while allowing logical operations composed primarily of processor-local gates.
We derive a lower bound on processor-nonlocal gate reduction in a distributed context and present corresponding simulations which indicate that this nontraditional approach can provide marked advantage over existing approaches in the highly non-uniform error landscape of a distributed quantum computer.
As a concrete realization, we present encoding and decoding circuits for the permutation-invariant W-state code and propose efficient methods for its preparation.
These results highlight the potential of approximate distributed error correction strategies for scalable, modular, fault-tolerant quantum computation.

\end{abstract}

\maketitle

\section{Introduction}

Modular quantum computing architectures, composed of multiple chips or processors, have emerged as a practical route to scale up qubit counts beyond the engineering limitations of a single device~\cite{distributed-qc-survey}. In such architectures, quantum processing units (QPUs) are networked via interconnects (typically optical fiber~\cite{distributed-qc-optical-link} or microwave~\cite{microwave-interconnect} links) to act as one larger computer. This approach is being pursued across platforms including superconducting circuits~\cite{ibm-combining-processors-classical, tour-de-gross}, trapped ion arrays~\cite{distributed-qc-optical-link, modular-ionq, modular-trapped-ion}, and photonics~\cite{scaling-modular-photonic-qc}.

However, modular quantum computing faces its own challenge: inter-module operations are noisy and slow, exacerbating the already non-trivial task of practical quantum error correction (QEC). Traditional QEC codes like the surface code assume a monolithic array of qubits with uniform error rates and fast local gates, conditions that modular systems violate. Distributed gates suffer from photon loss, interconnect latency, and other sources of error during inter-module communication.

To date, two broad approaches to quantum error correction in modular architectures have been studied: \textbf{local QEC (LQEC)}, where each module independently runs an error-correcting code and only high-level quantum operations (teleportations, remote gates) connect the modules, and \textbf{distributed QEC (DQEC)}, where each individual QEC code block is spread across multiple modules. While LQEC is the most commonly adopted approach, DQEC has recently emerged as a promising candidate for better performing code layouts in distributed systems~\cite{buffalo-distributed, cisco-distributed-color-code, elkouss-dqec, nu-quantum-floquet, jiang-cosmic-ray-errors}.
The most appealing quality of DQEC is its capacity to reduce the number of non-local gates in an error-corrected circuit by housing physical qubits from more code blocks on each processor, allowing transversal gates to be executed locally -- however, this must be balanced against a decrease in locality for error correction decoding.
In addition, it has previously been found that DQEC enables greater resistance to local errors due to catastrophic events~\cite{jiang-cosmic-ray-errors} and we extend this characteristic to more general spatially-correlated errors.
%We find that DQEC has two major advantages over LQEC: a better error correction performance and utility for evading losses due to distribution (as opposed to a single processor that contained the same number of qubits locally). 

Existing formulations of DQEC, however, face a fundamental roadblock to the practical realization of this opportunity: non-local gate reduction via DQEC is most effectively achieved by transversal gates, such that all physical gates may be executed locally on each processor, %(despite the code block being distributed)
but the Eastin-Knill theorem~\cite{eastin-knill} precludes any code from exactly correcting errors with a universal set of transversal gates.
Existing DQEC schemes therefore either forfeit gate locality~\cite{elkouss-dqec} or execute expensive alternatives such as magic state injection to implement universal quantum computation~\cite{buffalo-distributed, cisco-distributed-color-code, nu-quantum-floquet}.

%The basic idea of DQEC is that there is a trade-off: processor-nonlocal gates are required to encode and decode into a distributed code, but are not required for processing on the encoded qubits (assuming the gates in question can be performed transversally within that code). If the lifetime of the logical qubit is not a limitation, this may reduce the total number of expensive processor-nonlocal gates required to execute a logical quantum circuit. However, due to the Eastin-Knill theorem~\cite{eastin-knill}, it is impossible to perform a set of universal operations on the encoded state transversally. There are many ways around this limitation, the most common of which is magic state distillation~\cite{bravyi-kitaev-distillation}. This introduces a significant resource overhead, which is exacerbated by poor nonlocal gate fidelity in the distributed case. 

In this paper, we argue for a new paradigm of \emph{Distributed Approximate Quantum Error Correction (DAQEC)}, which integrates approximate QEC codes into distributed, modular architectures.
Approximate codes sacrifice exact error correction but are not subject to the Eastin-Knill theorem, and therefore may admit a universal transversal gate set~\cite{faist}. %Although there are other approaches for 
We find that DAQEC can outperform both exact distributed QEC and conventional local QEC in a modular quantum computing context by leveraging inhomogeneous error conditions and enabling universal transversal quantum computation. %; a summary of these results can be found in Table~\ref{overview-table}.
We also find that such schemes admit  additional desirable properties which facilitate the composition and decoding of codes across processors.

This paper makes the following contributions:
\begin{itemize}
\item \textbf{Distributed Approximate Quantum Error Correction (DAQEC).} We introduce a novel DAQEC framework and derive a lower bound on its performance advantage in specific regimes.
\item \textbf{Distributed QEC Performance Advantage via Non-Local Gate Reduction.} We quantify the reduction in processor-non-local gates through DQEC and demonstrate its potential performance advantage through detailed simulation. We further provide an interconnect-fidelity-independent bound on DQEC advantage.
\item \textbf{Improved W-state Preparation Circuits.} We present a more resource-efficient construction for creating $2^k$-qudit W states, improving on~\cite{yeh-qudit-w-state} for these instances.
\item \textbf{Explicit W-state Code Circuits.} We present the first explicit constructions of the encoding and decoding circuits of the W-state code, which may be of independent interest beyond our results on distributed error correction.
\item \textbf{Collective Fault-Tolerant Paradigm.} We introduce the concept of collective fault-tolerance, in which a network of fault-tolerant processors forms a collectively fault-tolerant system via distributed code concatenation, and show how this can be achieved using DAQEC.

\end{itemize}

The remainder of this paper is structured as follows. In Section~\ref{sec:dqec}, we examine the tradeoffs that come with distributing code blocks and compare our approach to related work.
In Section~\ref{sec:w-state-code}, we discuss the W-state code and give its explicit encoding and decoding circuits.
Then in Section~\ref{adqec}, we detail the advantages of DAQEC in modular systems.
Finally, we discuss opportunities for further research in Section~\ref{discussion}.

\section{Quantum Error Correction in the Multi-QPU Setting}
\label{sec:dqec}
Elevated noise levels and limited connectivity across module interconnects in the distributed setting introduce novel challenges in quantum error correction beyond those present in a monolithic device.
A fundamental question in this context is how to allocate code blocks to physical systems, which may each be local to a processor, or may be partially or entirely distributed.
As we will see, the arrangement of blocks within or between processors brings about a tradeoff between the complexities of two-qubit logical gates and QECC decoding, carrying implications for the system's capacity to correct errors.

\subsection{Local QEC}

Local QEC deploys code blocks module-wise (Figure~\ref{overview}, left). In this straightforward approach, each QPU module protects its qubits with an internal QEC code (for example, a surface code on a 2D grid of superconducting qubits, or a Bacon-Shor code on an ion trap device). Quantum communication between modules is then required only for logical gates between logical qubits on different processors. For example, to perform a CNOT between logical qubits in the green and purple blocks of Figure~\ref{overview} (left), one must perform remote gates or teleportation via shared entanglement.

The LQEC approach has the benefit that well-developed QEC codes and decoders for single-processor systems can be applied with minimal modification on each module. Indeed, early demonstrations of logical qubits have been achieved on single devices~\cite{bluvstein2024logical, da2404demonstration, sivak2023real} and recently have been shown to correct errors below threshold~\cite{google-qec}. By networking such devices, one can in principle build a larger, scalable quantum computer, similar in concept to the way classical supercomputers scale by connecting many smaller processors. Recently, a proof of concept modular architecture for photonic platforms has been demonstrated~\cite{scaling-modular-photonic-qc}.

However, the limitations of local QEC become apparent as the number of modules grows. Logical gates must be executed between processors, severely hampering gate fidelity under noisy interconnect conditions and putting strain on the already resource-intensive inter-module entanglement generation system. 
Intuitively, LQEC is a natural approach which induces modules which each fully contain a set of logical qubits but which must rely on interconnects to execute gates between remote logical qubits.
This increases sensitivity to link errors, which in turn can drastically lower the effective fault-tolerant threshold of the whole machine.
% DQEC facilitates the fault-tolerant implementation of logical gates via processor-local physical gates.

\begin{figure}
\includegraphics[width=\columnwidth]{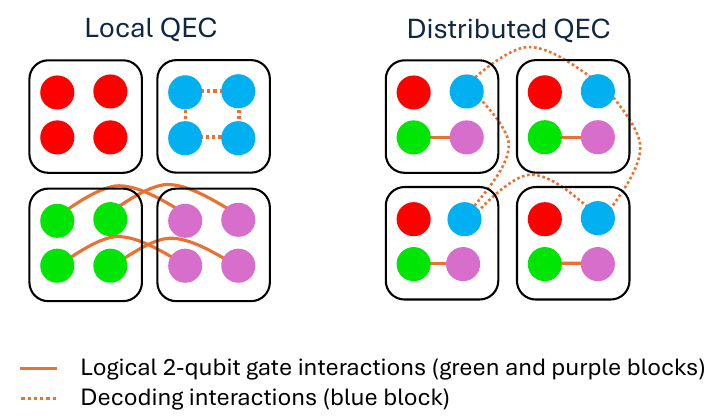}
\centering
\caption{Comparison of interactions required for local QEC and distributed QEC in a 4-QPU system. Each QPU holds four qubits and qubits of the same color belong to a single code block. Distributed QEC enables logical 2-qubit gates comprised entirely of local physical gates (assuming transversality), at the expense of a more complex distributed decoding procedure.}
\label{overview}
\end{figure}

\subsection{Distributed QEC}

Limitations of local QEC motivate \emph{distributed} QEC, an underexplored approach which encodes logical qubits into physical systems distributed across multiple modules. Below, we discuss how employing distributed code blocks has the potential to reduce processor interconnect utilization and improve resilience to correlated errors. We acknowledge that with a DQEC approach, one ostensibly sacrifices the efficiency of standard decoding strategies which are readily applicable in LQEC (see Figure~\ref{overview}, right), motivating a new area of research into decoding algorithms under processor-biased noise and with heterogeneous interconnect efficiencies. In Section~\ref{adqec}, we will formulate an approach using approximate codes to mitigate the difficulties of DQEC.

\subsubsection{Fewer Processor-Non-Local Gates}

\begin{figure} 
\centering
\begin{subfigure}{\columnwidth}
        \includegraphics[width=\columnwidth]{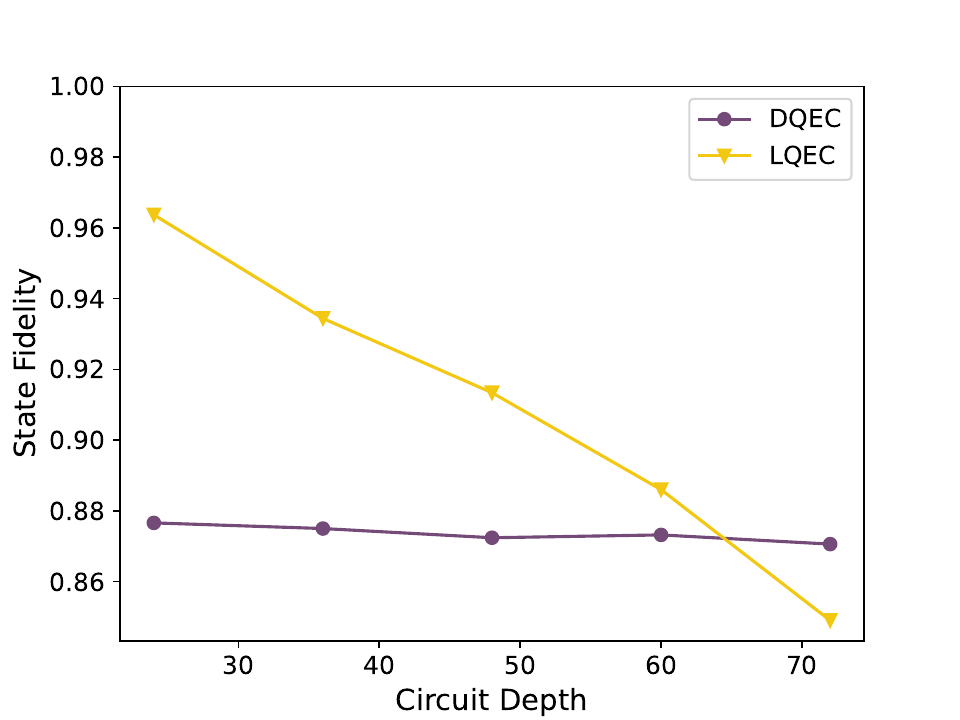}
        \caption{LQEC requires noisy inter-processor gates for logical operations and is outperformed by DQEC at sufficient depth between decoding operations.}
    \end{subfigure}    
    \begin{subfigure}{\columnwidth}
        \includegraphics[width=\columnwidth]{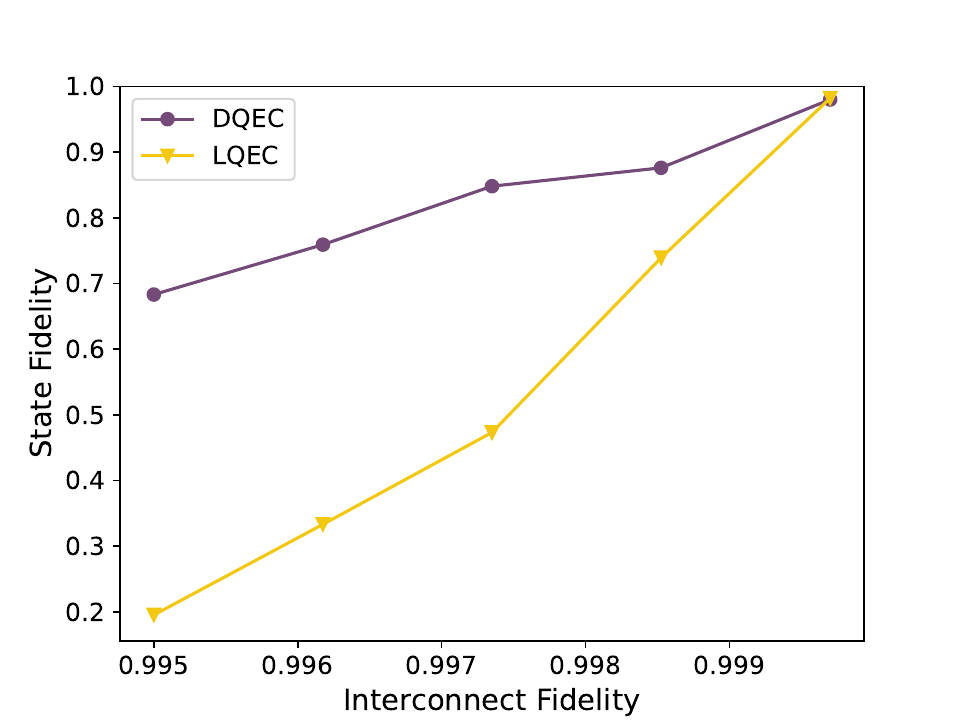}
        \caption{For sufficient depth between decoding stages, DQEC leads to higher final state fidelity for sub-optimal interconnect fidelities.}
    \end{subfigure}
\caption{
Relative performance of distributed and local code block allocation schemes as a function of circuit depth between decoding stages.
Note that these results apply to the 7-qubit Steane code only for circuits composed of transversal gates; a method applicable to general circuits is presented in Section~\ref{adqec}.
}
\label{fig:pnl-reduction}
\end{figure}

A primary advantage of distributing code blocks across processors is that it provides the opportunity to execute transversal logical gates locally (within a single processor), since this arrangement allows physical qubits from a larger number of code blocks to be placed onto each QPU. 
By optimizing the allocation of physical qubits to code blocks, one may minimize or eliminate the need for remote physical gates during logical gate execution, thereby reducing the infidelity associated with these inherently noisy operations.% It is also possible to allocate logical blocks within a processor in such a way as to minimize the utilization of noisier local connections during transversal gate execution.

This distributed layout introduces a tradeoff, as distributed decoding must be performed instead. However, for sufficient logical circuit depth between each decoding gadget and for sufficiently shallow decoding circuits, the total number of non-local gates is lower under distributed error correction.
In this way, DQEC techniques provide a viable path toward resource reduction in modular quantum computing architectures, and will become more advantageous relative to LQEC as the error rates of individual components improve.

We validate this behavior through detailed simulations of the relative performance of LQEC and DQEC. The results are shown in Figure~\ref{fig:pnl-reduction}.
We simulate seven $[[7,1,3]]$ Steane code blocks across seven processors, each hosting 13 qubits. In the LQEC configuration, all seven encoded qubits and the six required ancillas reside within a single processor. In contrast, in the DQEC configuration, both the encoded qubits and the ancillas are fully distributed across processors.
Each two-qubit gate is followed by a depolarizing channel with error parameter $p=2\times10^{-3}$ for remote gates and $p=2\times10^{-4}$ for processor-local gates.
While these error rates exceed current hardware capabilities, they represent the regime of practical interest for demonstrating distributed quantum computation.
We run a mirrored~\cite{mirror} GHZ preparation circuit comprised of transversal gates in this code and observe that, as circuit depth increases, the LQEC scheme becomes dominated by errors from noisy inter-processor gates, while the DQEC scheme remains comparatively robust.

One apparent obstacle to realizing fully local gates through DQEC is that non-transversal gates will require substantial overhead; e.g., distributed magic state distillation.
Indeed, the above simulation only involves gates which are transversal in the $[[7,1,3]]$ Steane code (H and CNOT), and the observed advantage therefore does not extend to general circuits for this code.
We explore a method to circumvent this limitation using approximate codes in Section~\ref{adqec}.

\subsubsection{Resilience to Spatially Correlated Errors} \label{exact-distributed-spatially-correlated-errors}

\begin{figure} 
\includegraphics[width=\columnwidth]{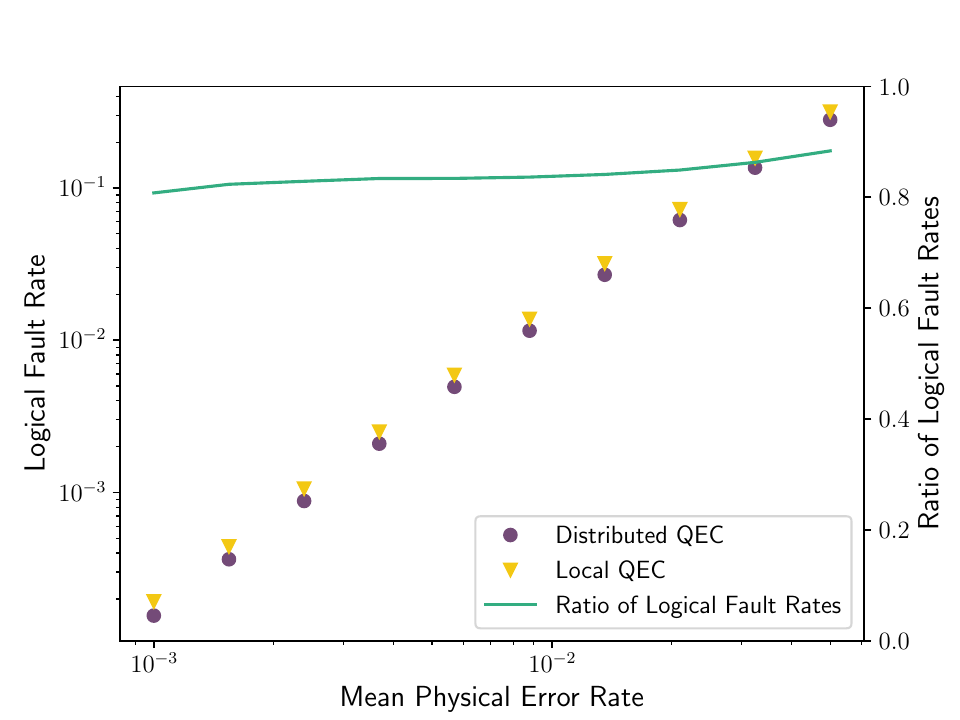}
\centering
\caption{Logical error rates of distributed QEC (red dots) vs local QEC (blue dots) under spatially correlated error conditions. We simulate seven $[[7,1,3]]$ Steane codes across seven processors, comparing the cases where code blocks are each contained within their own processor (local) and each spread among all processors (fully distributed). The processor error rates are sampled from normal distributions, with the standard deviation set to 0.5 times the mean error rate. Distributed code blocks consistently outperform local code blocks, resulting in 13-20\% lower logical error rates (green curve).}
\label{dqec-vs-lqec}
\end{figure}

Another key advantage of DQEC in distributed systems is its capacity to manage non-uniform error environments more gracefully than local QEC. In a large-scale distributed quantum computer, hardware inhomogeneity is inevitable as some modules may host qubits with longer coherence times than others, certain communication channels might suffer elevated loss or delay, and devices may drift out of calibration at different rates. This challenge is amplified in heterogeneous architectures where processors have varying qubit connectivities and may even be based on different qubit technologies. Additionally, error processes within a module, such as qubit crosstalk or burst errors~\cite{jiang-cosmic-ray-errors}, tend to be spatially localized and correlated. 
Crosstalk, in particular, is a prevalent source of error which can induce multiple errors within a single code block when code blocks are composed of neighboring qubits.

The result is that the distribution of errors across a realistic modular system is not i.i.d.\ -- in particular, errors are more likely to be spatially clustered. Consequently, distributing code blocks in space spreads these local errors across different code blocks, lowering the probability that the number of errors rises above the quantity that can be corrected by the code. 

In this sense, DQEC strictly outperforms LQEC under spatially correlated errors.
We demonstrate this result through simulation in Figure~\ref{dqec-vs-lqec} for $[[7,1,3]]$ Steane codes in a seven-processor device with uneven error rates across modules. This confirms our intuition that spreading errors evenly among code blocks results in a lower overall probability of fault.

\subsection{Limitations of Existing Work on DQEC}  \label{existing-work}

Despite the potential for advantage from distributed error correction in the multi-QPU setting, standard approaches involving exact codes are not well suited to exploit this potential. As a consequence of the Eastin-Knill theorem, exact error-correcting codes cannot implement a universal set of transversal gates, and any DQEC scheme based on such codes therefore must incur additional overhead or complexity.

The conventional method to circumvent this issue is to apply magic state injection to realize the non-transversal gate. The distillation of magic states is notoriously resource-intensive: even with optimized protocols, distilling a single $|T\rangle$ state with infidelity $10^{-12}$ from states of infidelity $10^{-4}$ incurs a qubit overhead of \mytilde$10^4$ and a gate overhead of at least this amount~\cite{cost-of-universality}. Existing distributed QEC schemes~\cite{buffalo-distributed, cisco-distributed-color-code, nu-quantum-floquet} necessitate \emph{distributed} magic state distillation, incurring a potentially overwhelming overhead.

Another workaround to the Eastin-Knill theorem is to employ code switching, a technique which alternates encoding between two different QEC codes, which together are capable of implementing a universal set of transversal gates. While conceptually feasible, code switching requires fault-tolerantly mapping the entire encoded state between distinct codes, which is a complex process involving many extra gates and measurements. A recent study~\cite{cost-of-universality} directly comparing magic state distillation and code switching for 2D color codes (often cited as one of the most promising QEC schemes) found that code switching is more resource-intensive than state distillation, with a T-gate threshold nearly an order of magnitude lower. Thus, both approaches carry heavy costs for large circuits.

Xu et al.~\cite{jiang-cosmic-ray-errors} introduce a DQEC scheme for chip-level erasure errors which exemplifies how spatially correlated errors may be mitigated by distributed error correction. Their work focuses on a specific source of noise, cosmic ray events, which are relatively rare compared to computational errors.
In our paper, we offer methods which extend the thesis of~\cite{jiang-cosmic-ray-errors} to more general sources of error through approximate codes.

Finally, while previous works have focused on the improved error correction performance derived from distribution, they do not quantify the reduction in processor-nonlocal gates that results from distributing logical qubits in such a way that logical gates can be performed processor-locally. We give the first results regarding the minimization of these noisy gates.

\section{Approximate Error Correction via the W-State Code} \label{sec:w-state-code}

Our proposed approach centers around distributing \emph{approximate} quantum error-correcting codes in a multi-processor setting. Approximate codes sacrifice the deterministic error correcting ability of traditional codes in favor of additional desirable properties, such as universal transversal gate sets. Further advantages of approximate codes arise in the distributed setting and will be discussed in Section~\ref{adqec}. 
Here we lay out the foundations of the permutation-invariant W-state code, a simple but representative approximate quantum error-correcting code. To the best of our knowledge, the preparation circuits presented here are the first explicit constructions of the W-state code, and we believe them to be of independent interest to the distributed aspects of this paper as their methods may be generalizable to  encoding and decoding schemes for other permutation-invariant codes.

\subsection{W State Preparation}

The W-state code is an approximate quantum error-correcting code whose encoding resembles the W state~\cite{eczoo-w-state}. 
%As we will see, this code evades the Eastin-Knill theorem and admits several advantageous properties in the distributed setting.
An $n$-qubit W state is the equal superposition of all basis states with a single excitation:
\begin{equation}
|W_n\rangle = \frac{1}{\sqrt{n}}\left(|100\dots0\rangle + |010\dots0\rangle + \cdots + |0\dots01\rangle \right)
\end{equation}

The W state can be generalized to systems of higher dimensions. For a $d$-dimensional system, the W state is often defined as:
\begin{equation}
\begin{split}
|W_n\rangle = \frac{1}{\sqrt{(d-1)n}}\sum_{j=1}^{d-1}\big(|j00\dots0\rangle + |0j0\dots0\rangle \\ + \cdots + |0\dots0j\rangle \big) \label{eq:quditW}
\end{split}
\end{equation}
Yeh~\cite{yeh-qudit-w-state} gives constructions for deterministic preparation of certain W states through the introduction of a novel non-Clifford qudit gate (the $\sqrt[d]{Z}$ gate). We improve on her construction, providing a circuit which scales a W state with $n=2^k$ qudits ($k\in \mathbb{Z}^+$) into a W state with $n'=2^{k+1}$ qudits by adapting a well-known construction to the qudit case.
Our construction, shown in Figure~\ref{fig:w-prep}, requires only a qubit (not qudit) ancilla and only uses $O(dn)$ non-Clifford gates (as opposed to $O(n^2)$ in Yeh's construction), but only works for W states whose size is a power of 2. In the context of a many-processor quantum computer, however, this is not necessarily a significant limitation. The circuit is recursive starting with an easy-to-prepare 2-qudit W state. In the qubit case, this is simply the Bell state $\ket{\Psi^+}=\frac{1}{\sqrt2}\left(\ket{01}+\ket{10}\right)$; in the qudit case this state is given by Equation~\ref{eq:quditW}.% and the construction of this state is detailed in Figure~\ref{fig:w-prep-2}.   

\begin{figure}
\centering
    \begin{subfigure}{\columnwidth}
        \includegraphics[width=.5\textwidth]{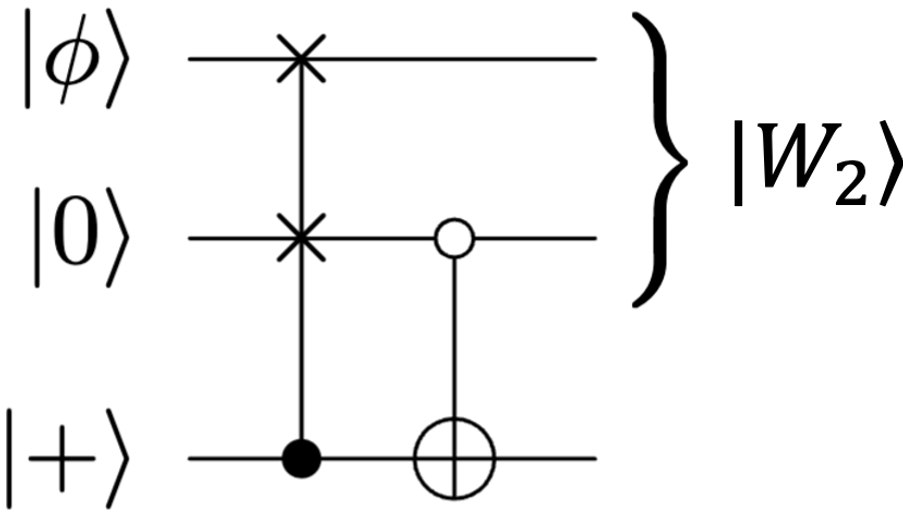}
        \caption{Preparation circuit for a qudit W state of size $n=2$ and dimension $d$, where $\ket{\phi}=\frac{1}{\sqrt{d-1}}\sum_{i=1}^{d-1}\ket{i}$.}
        \label{fig:w-prep-2}
    \end{subfigure}    
    \begin{subfigure}{\columnwidth}
        \includegraphics[width=\textwidth]{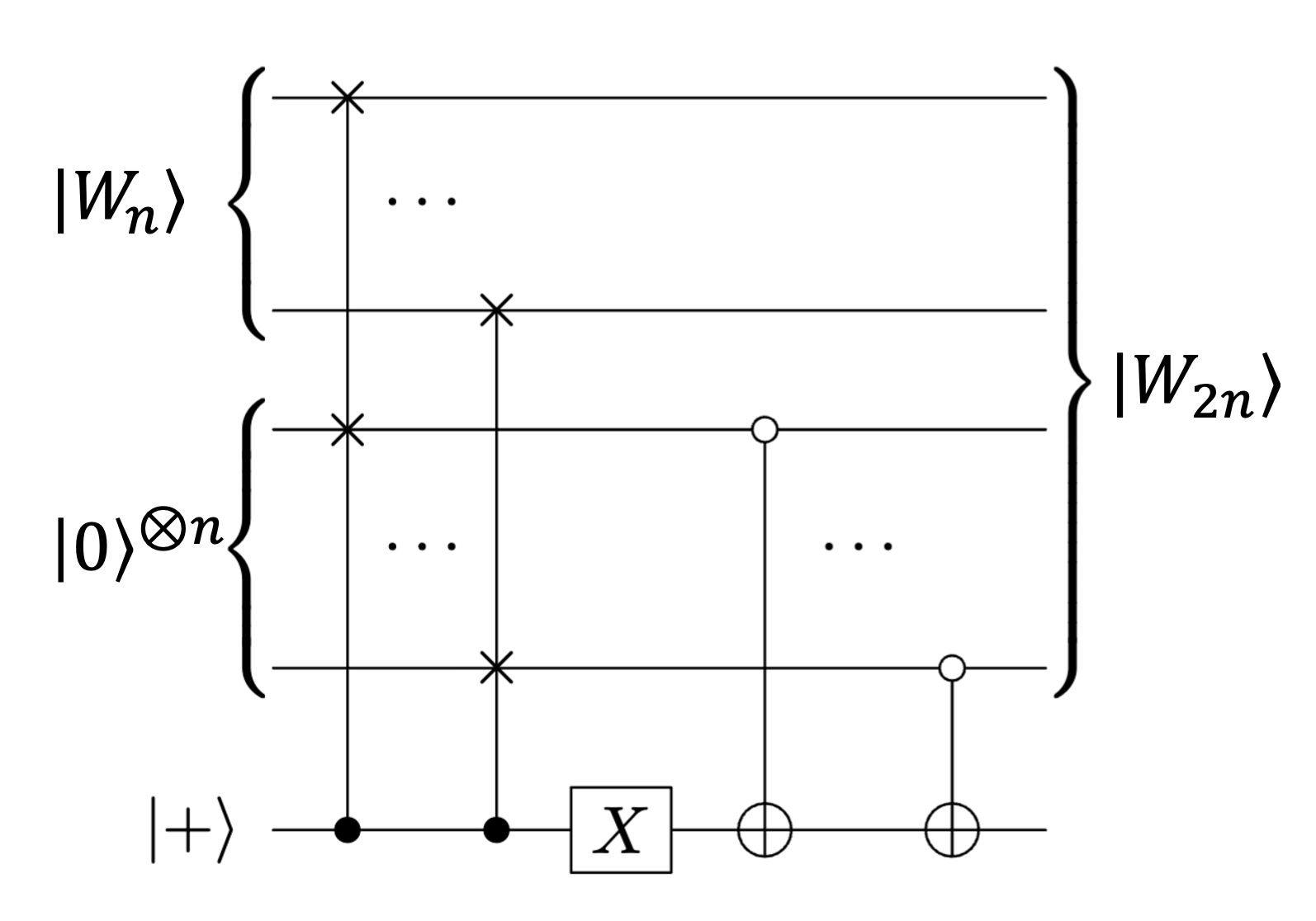}
        \caption{The W state scaling circuit can double the size of any qudit W state.}
    \end{subfigure}
    \caption{W state preparation procedure. We construct a $2^k$-qudit W state by first (a) constructing the 2-qudit W state and then (b) recursively doubling its size. In the $d=2$ case we can replace the circuit in (a) with a simple Bell state preparation circuit, which is composed only of Clifford gates.
Note that the ancilla ($\ket{+}$) in each circuit is a qubit, while other subsystems are qudits. The open-circle CNOT gates are controlled-on-$\ket{0}$-NOT gates; a decomposition is given in~\cite{yeh-qudit-w-state}.
}
    \label{fig:w-prep}
\end{figure}

\subsection{The W-State Code} \label{w-state-code}

\subsubsection{Definition}
The W-State code encodes a $d_L$-dimensional logical system into $n$ physical qudits of dimension $d_L+1$ in a state analogous to an $n$-partite $W$ state. The encoding of a logical state $|\psi\rangle$ is
\begin{equation}
\ket{\psi}_L = \frac{1}{\sqrt{n}}\left(|\psi\bot\bot\dots\bot\rangle + |\bot\psi\bot\dots\bot\rangle + \cdots + |\bot\dots\bot\psi\rangle \right)
\end{equation}
where $|\bot\rangle$ indicates the $(d_L+1)$-th basis state.
Then for $d_L=2$ we have the following logical states:
\begin{align}
\begin{split}
\ket{0}_L &= \frac{1}{\sqrt{n}}\left(\ket{022\dots2} + \ket{202\dots2} + \cdots + \ket{2\dots20} \right) \\ 
\ket{1}_L &= \frac{1}{\sqrt{n}}\left(\ket{122\dots2} + \ket{212\dots2} + \cdots + \ket{2\dots21} \right)
\end{split}
\end{align}

The intuition for the W-state code is that it encodes the location of the logical state in a superposition of the $n$ physical systems.
If any one of the $n$ qudits is lost or corrupted, the logical state is not immediately collapsed --- it is only affected with probability $1/n$.

\subsubsection{Transversal Gates in the W-State Code}

The W-state code readily admits a universal set of transversal gates: to apply a logical unitary $U$, one can simply apply $U$ to each $d$-dimensional physical qudit in a way that acts trivially on $|\bot\rangle$, i.e., $U\ket{\bot}=\ket{\bot}$~\cite{eczoo-w-state}.
This mechanism differs from that of most block codes.

This works directly for single-qudit unitaries since the W-State code is $U(d)$-covariant. 
A single-qudit logical unitary $U_L$ can be performed as $U_L=U^{\otimes n}$:
\begin{equation}
    U_L\ket{\psi_L}=\frac{1}{\sqrt{3}}\Big((U\ket{\psi})\ket{\bot\bot}+\ket{\bot}(U\ket{\psi})\ket{\bot}+\ket{\bot\bot}(U\ket{\psi})\Big)
\end{equation}

Notably, this approach does not work for logical 2-qudit unitaries between two separately-encoded W-state code blocks.
Instead, we require a construction that is $U(d^2)$-covariant, which we accomplish by encoding both relevant qudits into a single code block.
To encode $k$ d-dimensional qudits $\ket{\psi}:=\ket{\psi_1\cdots\psi_k}$, we simply set $d_L=d^k$ and define $\ket{\bot}:=\ket{d}^{\otimes k}$. 
When encoding qubits, this construction allows us to still use qutrits for arbitrarily large $d_L$.
As an explicit example, consider encoding two qubits $\ket{\psi}$ and $\ket{\phi}$ into four physical subsystems: we have the logical state
\begin{equation}
    \ket{\psi}_L= \frac{1}{\sqrt2}\Big(\ket{\psi\phi 22}+\ket{22\psi\phi}\Big),
\end{equation} 
Under this construction, single-qubit gates can still be performed transversally on the desired qubits and logical CNOTs can now be performed transversally as $CNOT_{1,2}\otimes CNOT_{3,4}$. 
%However, since these gates are no longer code-block transversal but rather transversal within the code, it is prudent to perform flag checks for leakage into the qutrit subspace after transversal two-qubit gate implementation. 

The W-state code achieves this trivial sidestepping of the Eastin-Knill theorem at the cost of requiring higher dimensionality and non-deterministic decoding.
While there exist other permutation-invariant codes with larger, non-trivial distance that can support some logical transversal gates~\cite{kubischta-PI-transversal-phase, ouyang-measurement-free-code-switching-PI}, these require code switching for universal transversality.

\subsubsection{Encoding Into the W-State Code}

\begin{figure}
    \centering
    \begin{subfigure}{\columnwidth}
        \includegraphics[width=\textwidth]{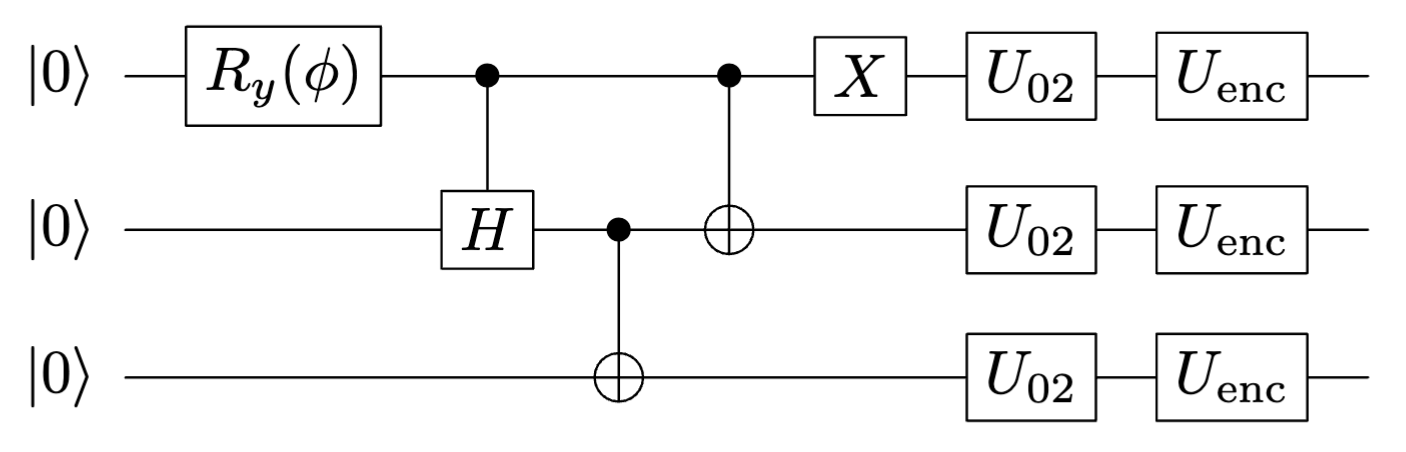}
        \caption{Encoding into the $n=3$ W-state code, requiring analog rotations for encoding and state preparation.}
        \label{fig:w-encode-3}
        
    \end{subfigure}
    \begin{subfigure}{\columnwidth}
        \includegraphics[width=.6\textwidth]{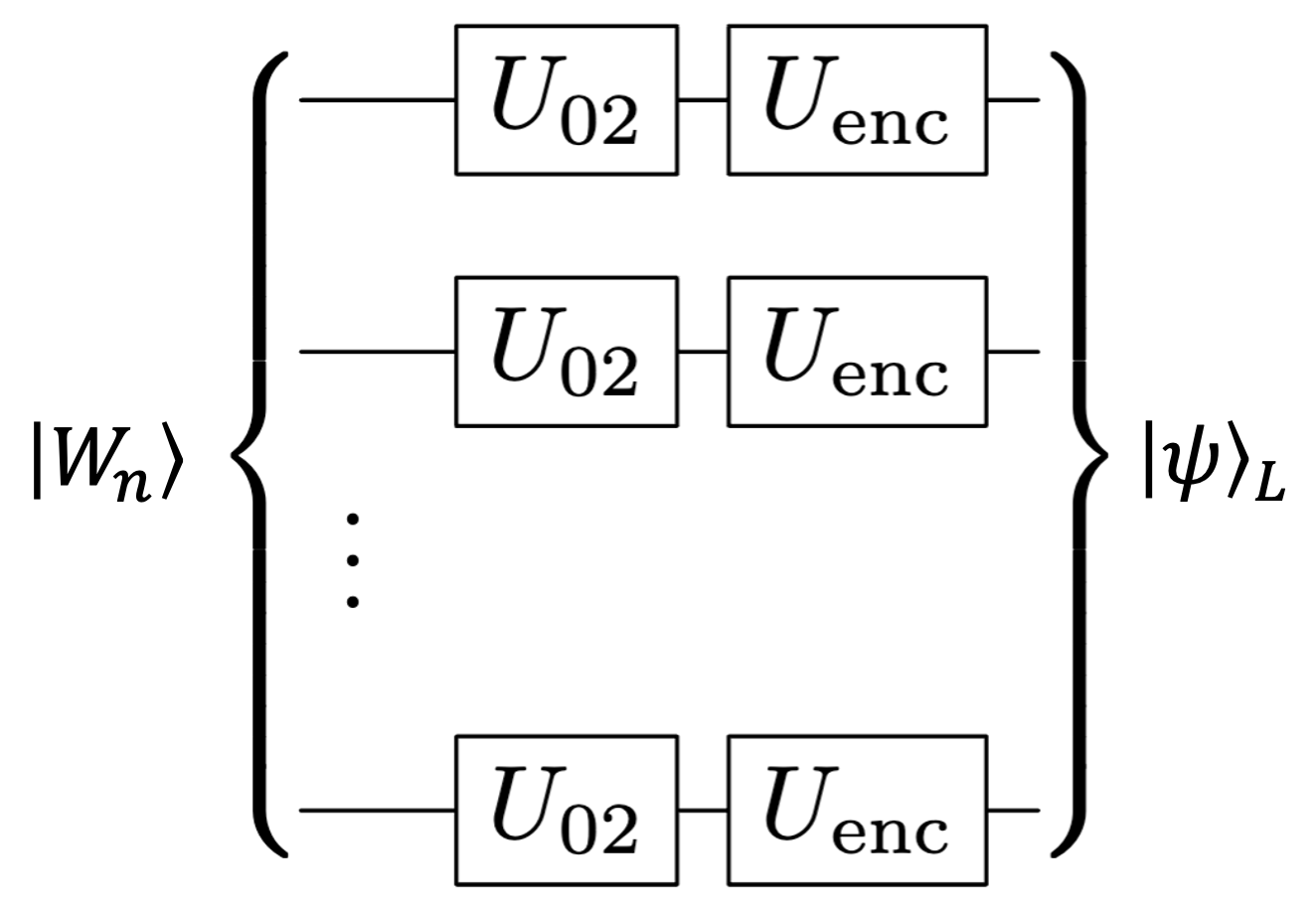}
        \caption{General W-state encoding using the analog operators $U_{02}=\ket{0}\bra{2}+\ket{2}\bra{0}+\ket{1}\bra{1}$ and $U_{\text{enc}}=\ket{\psi}\bra{1}+\ket{\psi_\perp}\bra{0}+\ket{2}\bra{2}$.}
    \end{subfigure}
    \caption{Encoding circuit for the W-state code. The encoding procedure consists of a W state preparation followed by analog rotations. (a) Encoding circuit for $n=3, d_L=2$, involving a Y-rotation about $\phi=2\cos^{-1}(1/\sqrt{3})$. (b) General encoding circuit. The encoding simplifies significantly for W states with $n=2^k$, as the input W state can be prepared according to Figure~\ref{fig:w-prep}. }
    \label{fig:encoder}
\end{figure}

Although the W-state code does not directly use the canonical W state as a codeword, the preparation of the code state follows a process similar to that of constructing a W state (see Figure~\ref{fig:encoder}). 
We make this explicit for the $n=3, d_L=2$ case and illustrate this case in Figure~\ref{fig:w-encode-3}. First, one prepares the W state $\ket{W_3}=\ket{100}+\ket{010}+\ket{001}$. Next, one applies the operator $U_{02}=\ket{0}\bra{2}+\ket{2}\bra{0}+\ket{1}\bra{1}$ to the state $\ket{W}$ as $U_{02}^{\otimes n}$. This encodes the logical $\ket{1}$ state 
\begin{equation}
    \ket{1}_L=\frac{1}{\sqrt3}(\ket{122}+\ket{212}+\ket{221}).
\end{equation}
We then define an encoding unitary $U_{\text{enc}}=\ket{\psi}\bra{1}+\ket{\psi_\perp}\bra{0}+\ket{2}\bra{2}$, where $\ket{\psi}=c_0 \ket{0}+c_1\ket{1}$ is the qubit state we wish to encode and $\ket{\psi_\perp}$ is an orthogonal state $\ket{\psi}$ in the $\{\ket{0},\ket{1}\}$ qubit subspace (required for unitarity). To encode, we apply the operator $U_\text{enc}^{\otimes n}$, encoding into each subsystem in turn and producing the logical encoded state
\begin{equation} \label{eqn:logical-w}
\ket{\psi}_L=\frac{1}{\sqrt3}(\ket{\psi 22}+\ket{2\psi 2}+\ket{22\psi}).
\end{equation}  

This procedure easily generalizes to the encoding of more than one logical qubit, as shown in Figure~\ref{fig:2qtransversal} for the case of encoding $k=2$ logical qubits into four physical subsystems.

\begin{figure}
    \centering
    \includegraphics[width=.6\columnwidth]{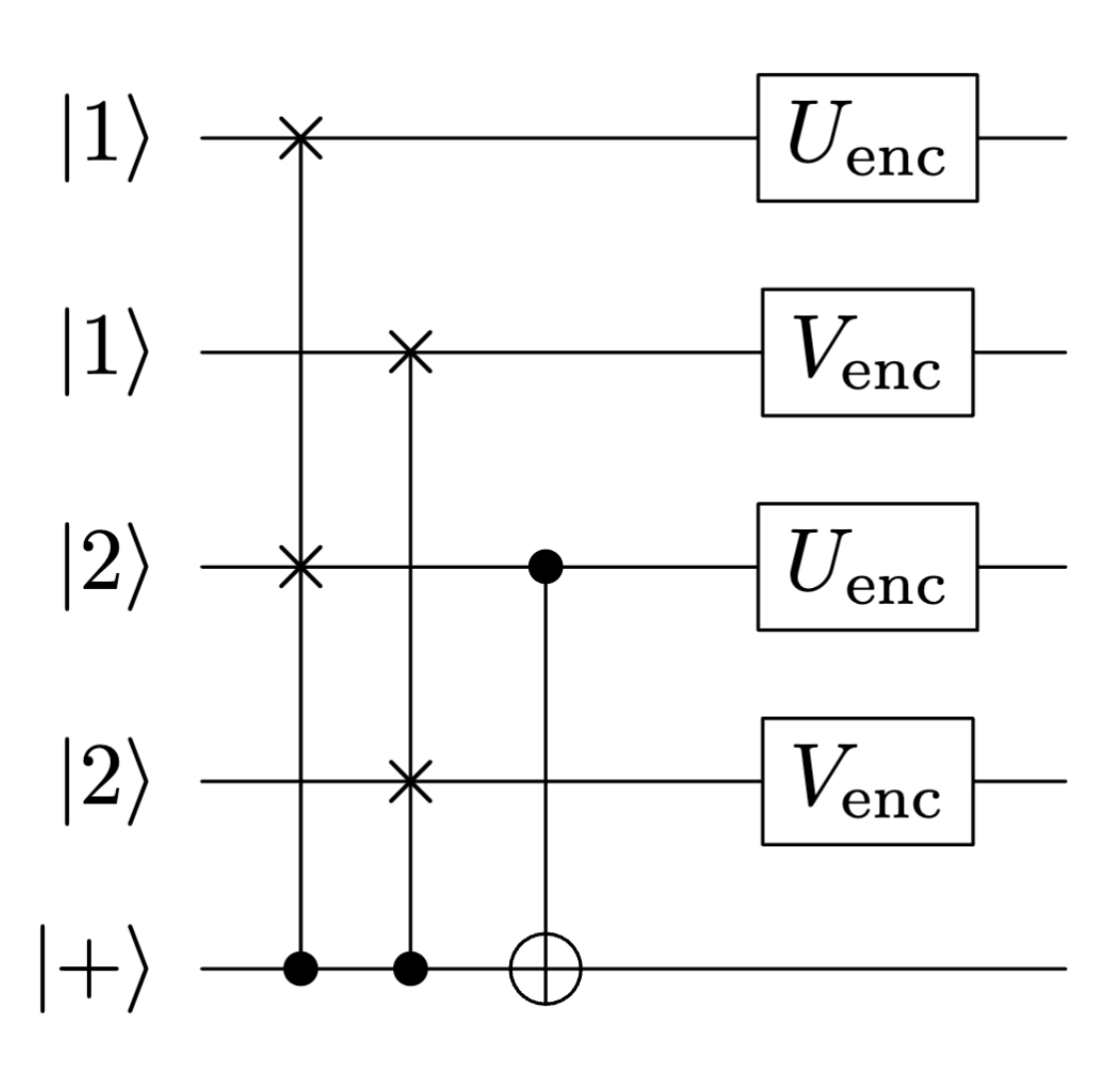}
    \caption{The single-qubit states $\ket{\psi}$ and $\ket{\phi}$ can be encoded into a $k=2$ W-state code using unitaries $U_{\text{enc}}$ (previously defined) and $V_{\text{enc}}=\ket{\phi}\bra{1}+\ket{\phi_\perp}\bra{0}+\ket{2}\bra{2}$.}
    \label{fig:2qtransversal}
\end{figure}

\subsubsection{Decoding the W-State Code} \label{w-state-decoding}

We consider decoding under erasure errors, as these are an increasingly dominant noise channel \cite{quantum-circuit-inc, leakage-to-erasure-conversion-aws, neutral-atom-erasure} and are the standard error channel analyzed in connection with approximate error correcting codes. An erasure error on the first subsystem of the state in Equation~(\ref{eqn:logical-w}) transforms the state as 
\begin{equation}
    \rho_{\varnothing}=\ket{\varnothing}\bra{\varnothing}\otimes\rho_e,
\end{equation}
where
\begin{equation}
\rho_e := \frac13\ket{22}\bra{22}+\frac23\left(\ket{\psi 2}+\ket{2\psi}\right)(\bra{\psi 2}+\bra{2\psi})
\end{equation}
and $\ket{\varnothing}$ denotes the erased subsystem.
This leaves us to decode $\rho_e$, which resembles a Bell state plus a noise term. We wish to apply the projector $\ket{12}(\bra{21}+\bra{12})+\ket{02}(\bra{20}+\bra{02})+\ket{22}\bra{22}$.
For the case of two unerased subsystems and $d_L=2$, this can be realized by the circuit in Figure~\ref{fig:decoder-2-with-measure}. 
The procedure for decoding in the case of erasures on the second or third subsystem is symmetric.
%A symmetric approach can be taken to correct errors on the second or third subsystem of the code, as flagged by location of the erasure. If there are no erasures, the circuit in Figure~\ref{fig:noedecode} should be applied instead to restore the original state $\ket{\psi}$. 

\begin{figure}
    \centering
    \begin{subfigure}{\columnwidth}
    \includegraphics[width=\columnwidth]{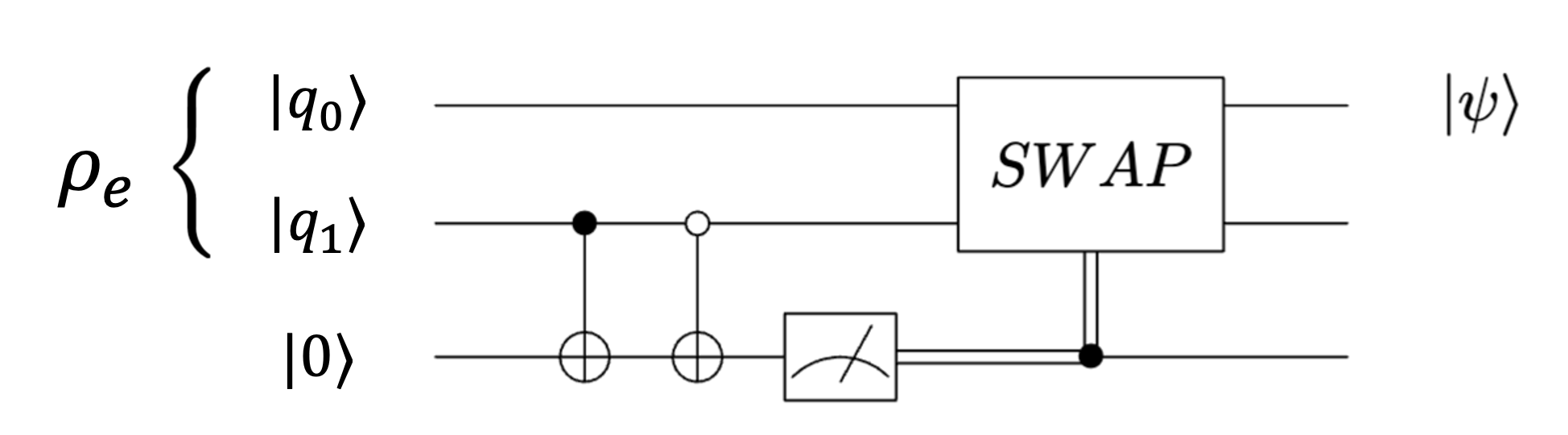}
    \caption{W-state decoder for $n=2$. The measurement-conditioned SWAP ensures that the decoded $\ket{\psi}$ is located in the first subsystem.}
    \label{fig:decoder-2-with-measure}
    \end{subfigure}
    \begin{subfigure}{\columnwidth}
    \includegraphics[width=\columnwidth]{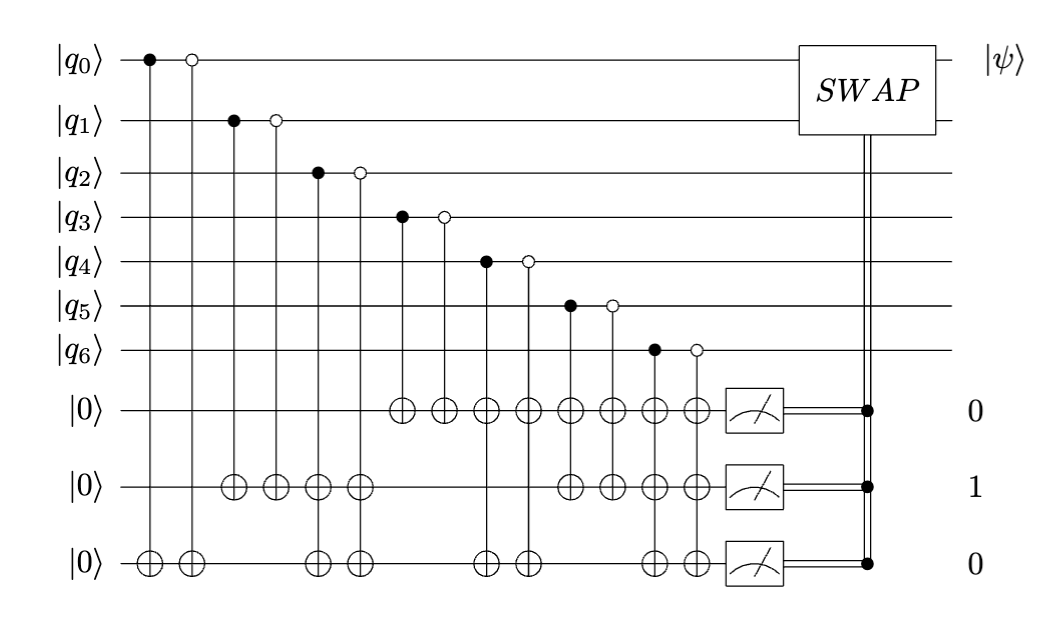}
    \caption{W-state decoder for $n=7$. The measurement outcomes indicate the decoded location of $\ket{\psi}$ (in this example \texttt{010} indicates $\ket{\psi}$ is in $\ket{q_1}$), which is then swapped into the first subsystem as desired. Measuring \texttt{000} indicates decoding failure.}
    \label{fig:decoder-7-with-measure}
    \end{subfigure}
    \caption{Decoding circuits for the W-state code involving mid-circuit measurements. The input to the decoder is an encoded state with $d_L=2$ and $n$ unerased subsystems. All ancillas are qubits. Figure (a) shows a minimal example with $n=2$ while (b) shows a larger instance with $n=7$. These circuits act trivially over erased subsystems, causing a failure with probability $\frac{n_e}{n+n_e}$ for a system with $n_e$ erasures. These circuits easily generalize to any $n$, requiring $\lceil \log_2(n+1) \rceil$ ancilla qubits and $O(n\log n)$ CNOT gates. Note that the ancillas are qubits while the other subsystems are qutrits.}
    \label{fig:decoder}
\end{figure}

In Figure~\ref{fig:decoder} and subsequent figures, the closed-circle CNOT represents a controlled-on-$\ket{1}$-CNOT where the control subsystem is a qudit and the target is a qubit. Similarly, the open-circle CNOT represents a controlled-on-$\ket{0}$-CNOT.
For the W-state codes considered in this paper with $d_L=2$ (and therefore subsystem dimension $d=d_L+1=3$), the action of a pair of these gates is to indicate in the ancilla the presence or absence of $\psi$. For example, defining $V$ to be the first two CNOTs in Figure~\ref{fig:decoder-2-with-measure}, the result of these gates is
\begin{equation}
\begin{split}
V\left(\rho_e\otimes\ket{0}\bra{0}\right) V^\dagger = \frac13\ket{22}\bra{22}\otimes\ket{0}\bra{0} \hspace{2.5cm} \\
+\frac23\left(\ket{\psi 2}\ket{0}+\ket{2\psi}\ket{1}\right)(\bra{\psi 2}\bra{0}+\bra{2\psi}\bra{1})
\end{split}
\end{equation}
so that the ancilla is 1 when $\psi$ is in the second position.
%The ancilla can then be used to indicate the location of $\psi$.

In this example with $n=2$ and one erasure, the decoding fails with probability $1/3$. However, as the size $n$ of the code increases, this failure occurs only with probability $1/n$, approaching the performance of an exact error-correcting code in the asymptotic limit.
%As shown in Figure~\ref{fig:decoder}, this decoding can be performed with or without a mid-circuit measurement, the latter of which eliminates the need for costly CSWAP gates and reduces the gate count by a Hadamard. 
It is also worth noting that in this measurement decoding case, the SWAP gate need only be applied half of the time. 

The decoding procedure in Figure~\ref{fig:decoder-2-with-measure} scales to $n$ (unerased) subsystems through a simple generalization. Figure~\ref{fig:decoder-7-with-measure} shows an example with $n=7$.
For each subsystem $q_i$, a pair of CNOTs encodes the binary representation of $i+1$ into $\lceil \log_2(n+1) \rceil$ ancillas.
The result of measuring the ancillas then indicates the decoded location of $\ket{\psi}$, while a measurement of all 0's indicates a decoding failure.

\subsection{Elective Decoding}

The W-state code exhibits a remarkable property: the state $\ket{\psi}$ may be decoded into any of the (non-erased) constituent physical qubits using fewer than one swap in expectation. We call this property \emph{elective decoding} to emphasize our ability to choose the resultant location of the decoded system.

To understand elective decoding, let us first consider the 3-qubit repetition code as an example. Figure~\ref{fig:repcode} shows how it is possible to decode into any of the 3 qubits comprising this code without additional swap gates. 
However, the 9-qubit Shor code (Figure~\ref{fig:shorcode}), which is built on repetition codes, is only able to decode into one of the three qubits which comprise the outer repetition code, since the encoding Hadamards were already positioned ahead of time. The qubits that executed a Hadamard during encoding must also execute a Hadamard in decoding, limiting our decoding options to those three locations.
In these examples, we assume in-place decoding, which requires non-Clifford gates and is generally not favored over syndrome extraction; however, the elective-decodability of the repetition code may itself be of interest, as noise-biased qubits such as \cite{catqubits} sometimes make use of the repetition code.

\begin{figure}
\centering
\begin{subfigure}{\columnwidth}
    \includegraphics[width=\columnwidth]{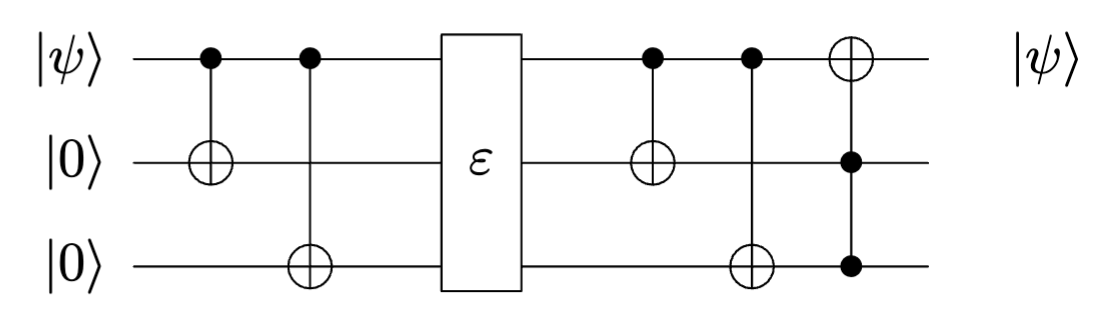}
    \includegraphics[width=\columnwidth]{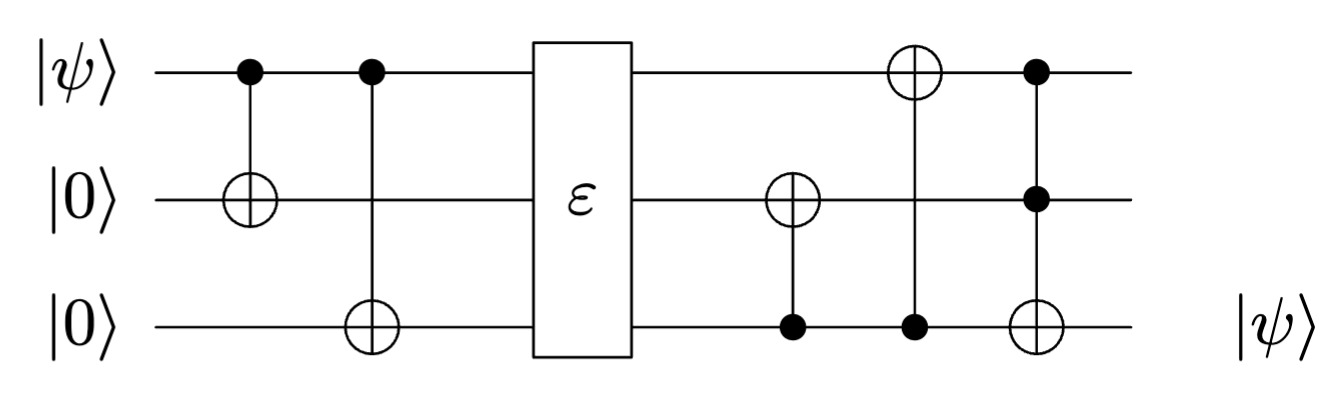}
    \caption{Two decoding circuits for the 3-qubit repetition code under one X-type error ($\varepsilon$). This code exhibits elective decoding, as it is possible to decode into a different qubit without additional swap gates.}
    \label{fig:repcode}
\end{subfigure}
\begin{subfigure}{\columnwidth}
    \includegraphics[width=\columnwidth]{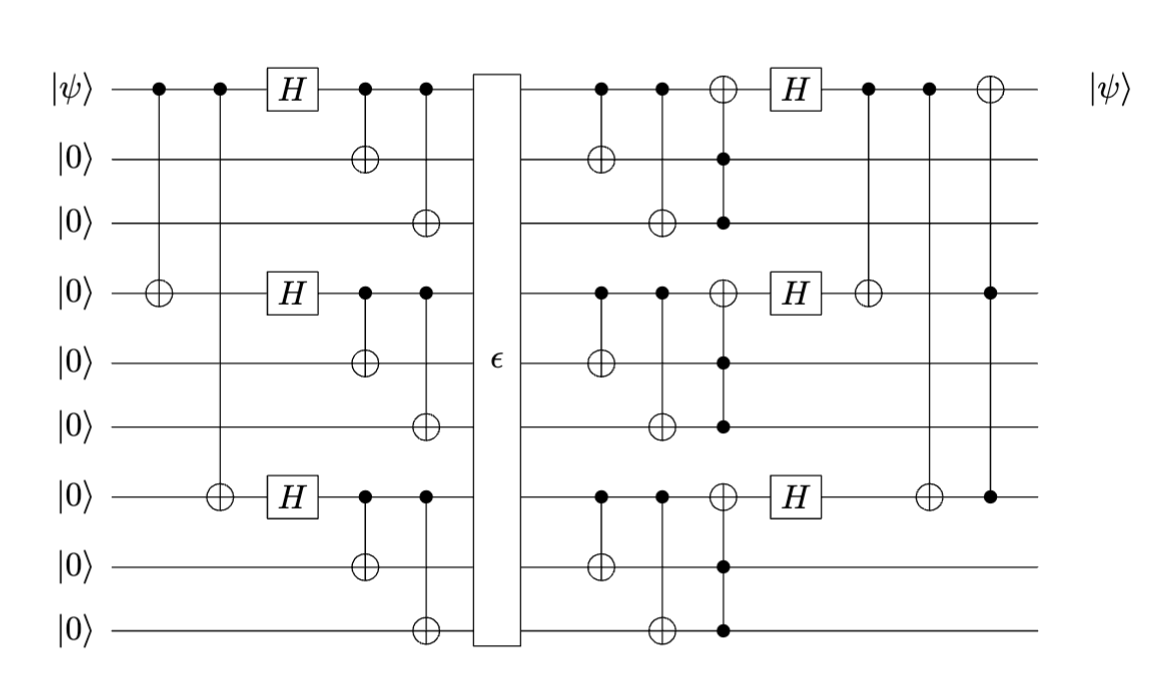}
    \caption{The Shor code does not exhibit elective decoding. One may only decode into one of the qubits which initially received a Hadamard.}
    \label{fig:shorcode}
\end{subfigure}
\caption{Elective decoding: (a) example and (b) non-example. The W-state decoders in Figure~\ref{fig:elective-decoding} also exhibit elective decoding.}
\end{figure}

We provide an alternate W-state code decoder which allows for elective decoding. %, shown in Figure~\ref{fig:elective-decoding}.
Figure~\ref{fig:decoder-2-elective} shows how this can be achieved for the $n=2$ case by replacing the measurement and SWAP in Figure~\ref{fig:decoder-2-with-measure} with a conditional SWAP.
We extend this approach to general $n$ by applying CNOTs and CSWAPs in a way which systematically moves all $\psi$ terms to the desired decoding location.
The algorithm recursively chooses half of the possible locations for $\psi$ terms and swaps any terms in these locations to one of the other locations by conditioning on a fresh ancilla.
This cuts the number of locations containing $\psi$ terms in half each round, which results in $\lceil \log_2 n \rceil$ rounds and therefore the same number of ancillas, as well as $n-1$ CSWAPS and $O(n)$ CNOTs.
We give the explicit circuit for this decoder in the $n=7$ case in Figure~\ref{fig:decoder-7-with-measure}, where the desired decoding location is $\ket{q_6}$.
After decoding, the ancillas are unentangled with the rest of the system. To return them to $\ket{0}^{\otimes\lceil \log_2 n \rceil}$, we can apply $H^{\otimes\log_2 n}$ when $n$ is a power of two as in Figure~\ref{fig:decoder-2-elective}; in other cases, we require an explicit reset.

\begin{figure}
\centering
    \begin{subfigure}{\columnwidth}
    \includegraphics[width=.8\columnwidth]{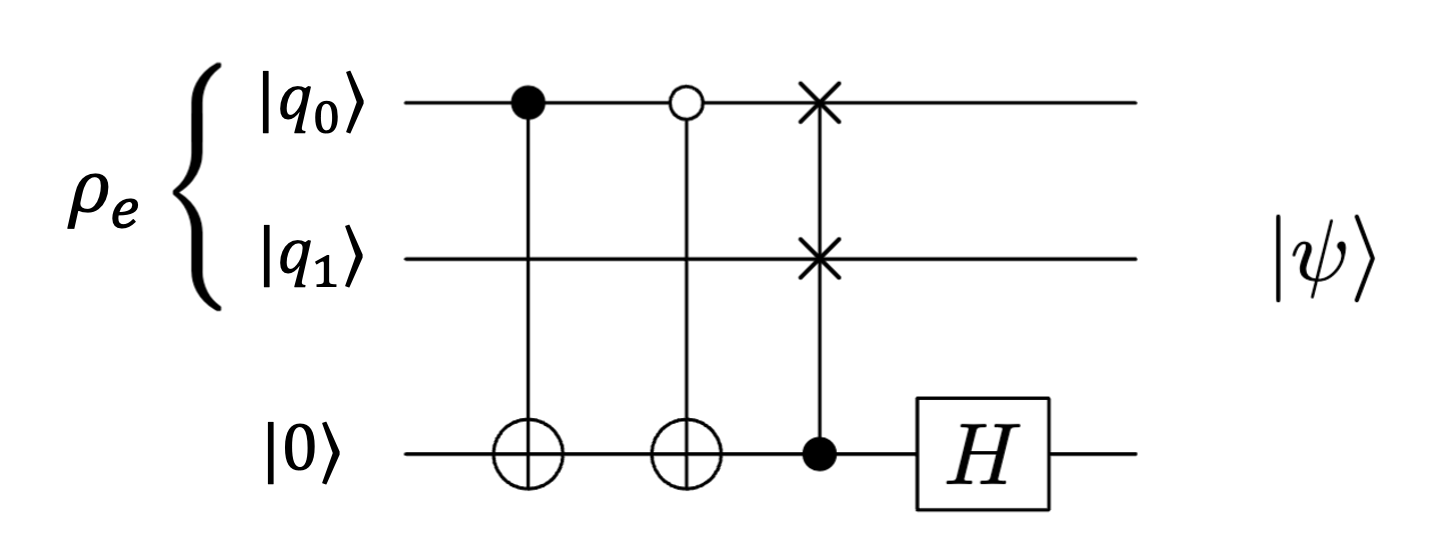}
    \caption{W-state decoder allowing elective decoding without measurement for $n=2$.}
    \label{fig:decoder-2-elective}
    \end{subfigure}
    \begin{subfigure}{\columnwidth}
    \includegraphics[width=\columnwidth]{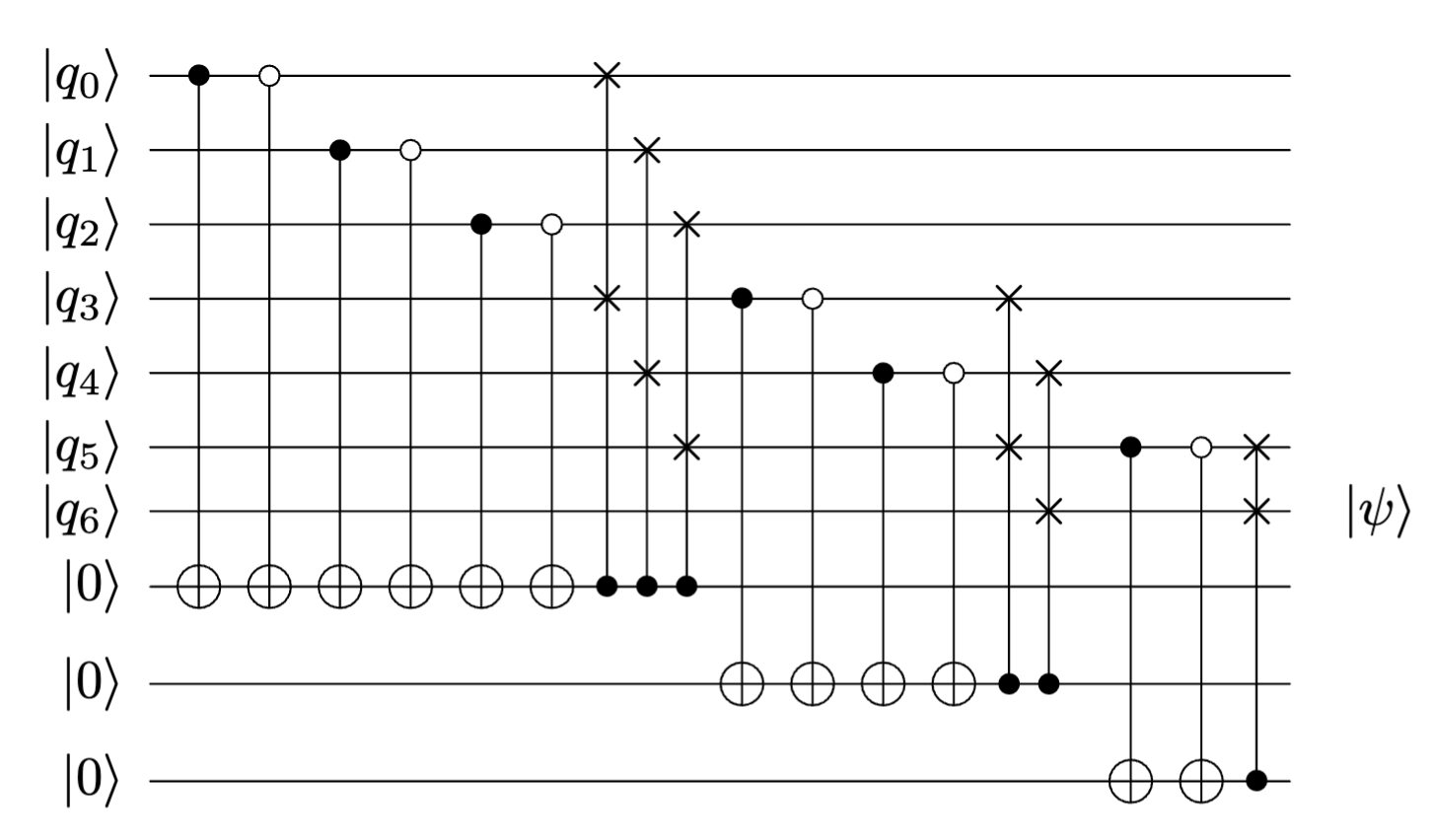}
    \caption{W-state decoder allowing elective decoding for $n=7$. In this instance, the desired decoding location is $\ket{q_6}$.}
    \label{fig:decoder-7-elective}
    \end{subfigure}
    \caption{Alternate decoding circuits for the W-state code exhibiting elective decoding. These allow the choice of decoding location to be deferred until decoding is performed. The decoders operate by recursively moving all $\psi$ terms to the desired subsystem, using $\lceil \log_2 n \rceil$ ancillas, where $n$ is the number of unerased subsystems. Note that the ancillas are qubits while the other subsystems are qutrits.}
    \label{fig:elective-decoding}
\end{figure} 

%In a distributed context, this can have a high impact in terms of reducing the number of swap gates required by the overall circuit, as mid-circuit decoding and encoding may be necessary for high-depth circuits. With Agnostic Decoding, it is no longer necessary to teleport qubits to recreate codes, if the code can simply be decoded into that qubit.

\subsection{Fault-Tolerant Encoding and Decoding}

\begin{figure}
\centering
    \begin{subfigure}{\columnwidth}
        \includegraphics[width=.6\textwidth]{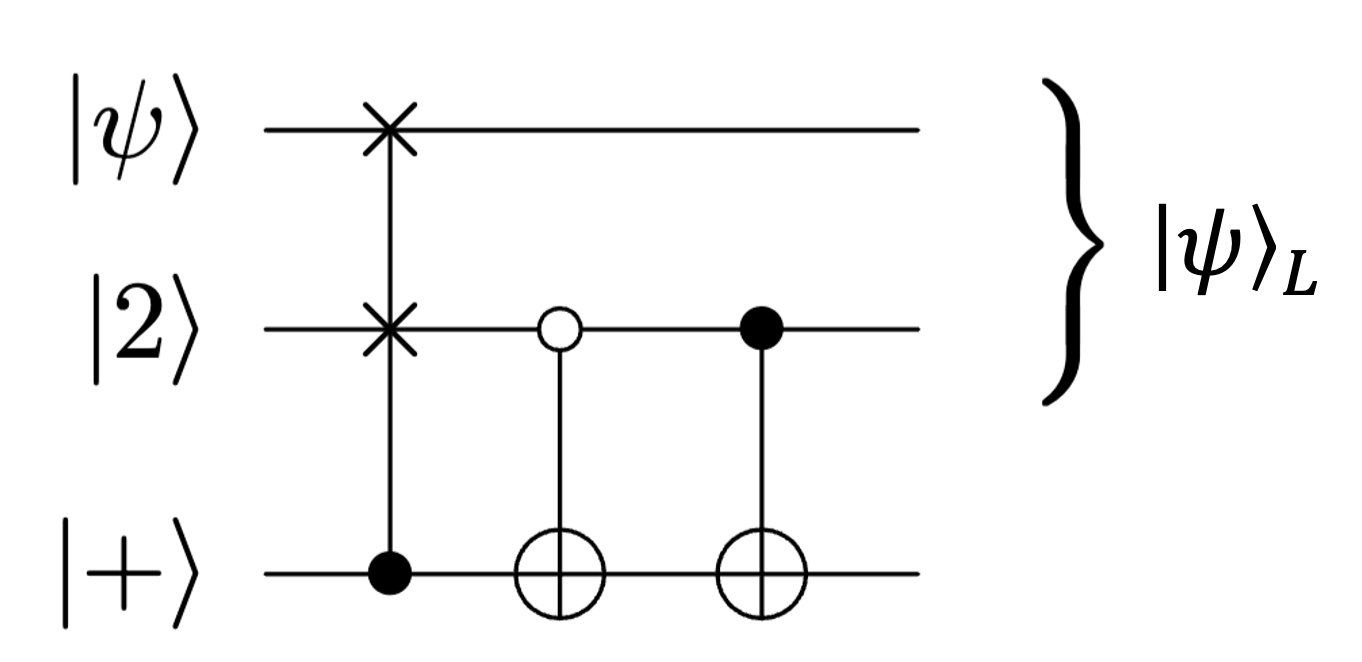}
        \caption{Minimal instance of alternative W-state encoder, encoding into a logical state of size 2.}
        \label{fig:w-prep-2}
    \end{subfigure}    
    \begin{subfigure}{\columnwidth}
        \includegraphics[width=\textwidth]{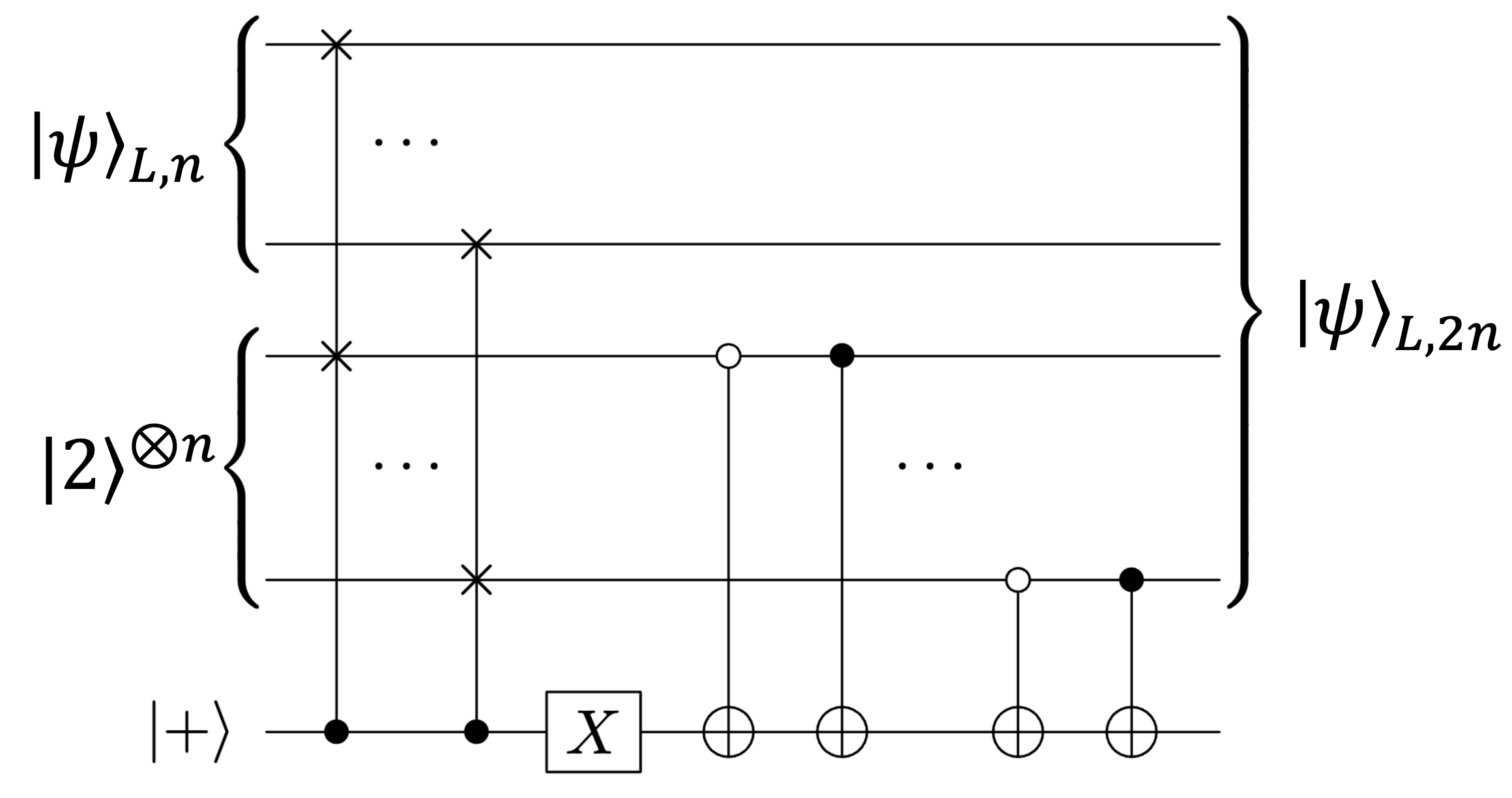}
        \caption{Scaling circuit doubles the size of a W-state code.}
    \end{subfigure}
    \begin{subfigure}{\columnwidth}
        \includegraphics[width=.6\textwidth]{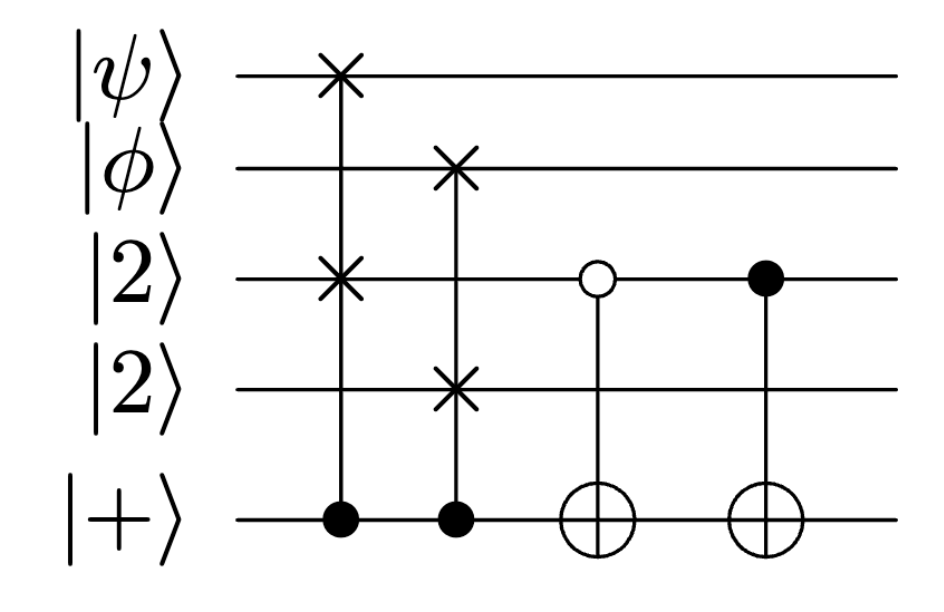}
        \caption{Minimal instance of alternative W-state encoder for encoding of 2 logical qubits. Fewer ancilla checks are required than for a code of the same size that encodes only 1 qubit.}
    \end{subfigure}
    \caption{An alternative encoding circuit for the W-state code following a construction similar to that of the W state preparation circuit in Figure~\ref{fig:w-prep}. We construct a size $2^k$ W-state codeword by first (a) constructing one of size 2 and then (b) recursively doubling its size. This circuit more closely resembles standard encoding procedures and uses a fixed number of non-Clifford gates; it also works to encode more than 1 qubit into the W-State code (c). It can also be used after Figure \ref{fig:elective-decoding} to reconstruct the W state code, akin to syndrome extraction.}
    \label{fig:ftencode}
\end{figure}

The circuits introduced in this section describe how to map a logical state $\ket{\psi}$ into the W-state code, following a structure inspired by the Schur transform~\cite{schur-transform}.
Due to their reliance on unitary projection operators, these circuits are susceptible to analog errors such as over- and under-rotations during implementation~\cite{analog}. 
While fault-tolerant encoding is, in principle, achievable to arbitrary precision through Solovay-Kitaev decomposition and magic state distillation for non-Clifford gates, an alternative approach based on SWAP and subspace controlled-NOT operations may be more practical to perform the mapping exactly using a fixed number of non-Clifford gates.
This design enables a clearer estimation of resource overheads and, as seen in Figure~\ref{fig:ftencode}, more closely resembles the structure of traditional encoders.

Fault-tolerant constructions for the qudit CSWAP and controlled-on-0 NOT gates have not yet been developed.
Both operations are non-Clifford qudit gates and require a thorough resource analysis of their own. 
Although fault-tolerant constructions for the Toffoli gate are well established~\cite{Toffoli}, these results do not generalize directly to the qudit setting.
The constructions presented here are therefore intended as a foundation for future analysis rather than a complete solution to the problem of qudit fault tolerance.
In Section~\ref{discussion}, we discuss how the use of more sophisticated codes, with established fault-tolerant implementations and superior performance characteristics relative to the W-state code, offers a promising path toward improved distributed error correction following the framework of this paper.

\section{Benefits of Approximate Codes for Distributed QEC} \label{adqec}

%In this section, we will explore how distributed quantum error correction with an approximate code (DAQEC) circumvents many of the difficulties faced by existing approaches to error correction for distributed quantum computing, as well as supplying additional benefits.
We propose the application of \emph{approximate} codes in distributed quantum error correction (DAQEC) as a novel approach to mitigate many of the substantial limitations of exact codes in the distributed setting, as discussed in Section~\ref{sec:dqec}.
We find that by circumventing the Eastin-Knill theorem, approximate codes may enable quantum computation with fewer processor-nonlocal gates than existing approaches without the need for magic states or code switching, in addition to exhibiting enhanced resilience under correlated noise sources.
We also find that approximate codes uniquely enable the composition of distinct QECCs across processors.
The benefits and drawbacks of DAQEC are summarized and contrasted against other approaches in Table~\ref{overview-table}.

\begin{table*}
\centering
\includegraphics[width=\textwidth]{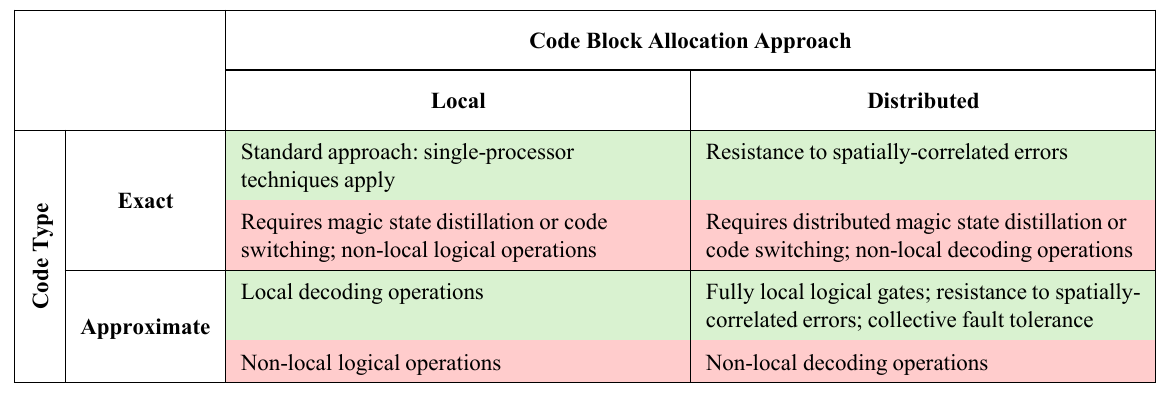}
\caption{Overview of the QEC approaches discussed in this paper in the distributed quantum computing context. Each approach carries advantages (shaded green) and disadvantages (shaded red). In particular, distributed approximate QEC provides properties not realized by other methods.}
\label{overview-table}
\end{table*}

\subsection{Bypassing the Eastin-Knill Theorem to Reduce Non-local Gates}

A fundamental theoretical advantage of approximate QECCs like the W-state code is their ability to evade certain no-go theorems that constrain exact codes. The Eastin-Knill theorem~\cite{eastin-knill} states that no quantum code can simultaneously (i) exactly correct all errors on each of its subsystems and (ii) realize a universal set of logical gates transversally. Therefore, any exact QEC code cannot admit a full set of transversal logical operations, requiring large-overhead schemes such as magic state distillation for universality in a distributed quantum computer~\cite{cost-of-universality}.

Approximate QECCs relax condition (i) by allowing some probability of logical error in the correction, and in return they may support a universal set of transversal gates. The W-state code is the simplest example of a code which meets these conditions. Recent works~\cite{rhea-alexander-approx-ekt, finite-approx-transversal, faist, kubica-approx-ek, universal-transversal-gates} have formalized approximate variants of the Eastin-Knill theorem, quantifying the trade-off between the worst-case error-correction infidelity $\varepsilon$ and the ability to admit transversal gates.

For distributed QEC, this is significant because it means that approximate codes can allow schemes that require only local operations on each module through transversal gates, without the need for intricate multi-QPU control during logical gate execution, at the cost of an (ideally small) logical error probability. In contrast, local and/or exact QEC codes require non-local control for logical operations when distributed across modules.

Due to this large overhead of exact schemes, here we consider the reduction of processor-non-local gates only for codes that admit universal sets of transversal gates.
Quantifying this reduction for a given QEC scheme compared to a baseline local scheme is a simple function of three variables: the number of processor-non-local 2-qubit physical gates required for encoding and decoding ($d_{\text{enc/dec}}$) and logical gate execution (for a single gate, this is equal to the code length $\ell_c$ in an $[[\ell_c, k=1, d\geq 3]]$ code multiplied by a ``nonlocality factor'' $\eta$), as well as the number of logical gates executed between each syndrome extraction or decoding stage ($d_\text{circuit}$).
For comparing a fully local scheme against a fully distributed scheme on $n_p$ processors, only the distributed $d_\text{enc/dec}$ and local $\eta=0$ are non-zero, so the distributed scheme sees an advantage when
\begin{equation}
d_\text{circuit}(1-\eta)\geq n_pd_\text{enc/dec}.
\label{eqn: basic}
\end{equation}
For more complex schemes --- e.g., when code blocks are neither fully distributed nor fully local --- both $d_\text{enc/dec}$ and $d_\text{circuit}$ may be non-zero. 
We now extend Equation~\ref{eqn: basic} to more general scenarios; given $n_L$ logical qubits, our analysis focuses on the case where $n_L=n_p$ using a code with length $\ell_c>n_p$.

%Consider a layout with $n_p$ processors, each able to store a logical qubit using some code with length $n>n_p$.
%The code block allocation which minimizes the number of non-local gates during logical gate execution is to allocate 
In the case that $n_p$ cleanly divides $\ell_c$, it is possible for all logical gates to be performed locally by allocating transversal partitions to the same processor. In cases where this division is not clean, this is still the optimal strategy (assuming all-to-all processor connectivity): allocate as many transversal partitions as possible to the same processors as evenly as possible ($q=\lfloor\frac{\ell_c}{n_p}\rfloor$) and partition the remainder ($s=\ell_c\text{ mod }n_p$) as symmetrically as possible.
Then repeat by partitioning the subsequent remainders in a similar fashion. The total number of required connections is given by $\frac{1}{2}\ell_cn_L(n_L-1)$, and the ratio of processor non-local gates after the partition is given by the nonlocality factor
\begin{equation}
\eta = \frac{s\left(n_p^2-ks^2-t^2\right)}{\ell_c n_L(n_L-1)}
\label{eqn: pnl}
\end{equation} 
where $k=\lfloor\frac{n_p}{s}\rfloor$ and $t=n_p\text{ mod }k$. 
Note that this expression assumes nothing about the first remainder $s$, but only holds for $s \text{ mod }t=0$. 
However, if $n_p<5$, $t$ cannot exceed $1$ for any odd $\ell_c$ (this would require $k=2$ and $n_p=4\implies s=2$, but $\ell_c\text{ mod }4$ cannot give an even result for odd $\ell_c$) and the expression above holds. 
It also holds for $\ell_c<11$ and for odd square $\ell_c\leq 625$ and $n_p<7$ (for odd valued square surface codes) by exhaustive checking. In general cases, $\eta$ can be simply bounded by
\begin{equation}
\eta \leq \frac{sn_p(n_p-1)}{\ell_cn_L(n_L-1)},
\end{equation} which in the cases where $n_L=n_p$ that we are considering is simply $s/\ell_c$ (this is the case where all remainder qubits are allocated as soon as they can be, which is optimal in some cases such as where $n_p=\ell_c-1$). Figure \ref{fig:pnl} shows an example with $n_p=3$ and $\ell_c=7$, which we can see also saturates this inequality. 
Then, assuming a uniform distribution of qubits involved in two-qubit gates across the encoder, decoder, and syndrome extraction, we can provide a bound as to where DAQEC provides advantage for $n_p<5$:

\begin{equation}
d_{\text{circuit}}(1-\eta)\geq d_{\text{enc/dec}}n_p\left(1-\frac{n_pq(q-1)}{\ell_c(\ell_c-1)}\right).
\end{equation} 
We can see that Equation~\ref{eqn: basic} follows as a special case.

\begin{figure}
\centering
\includegraphics[width=.89\columnwidth]{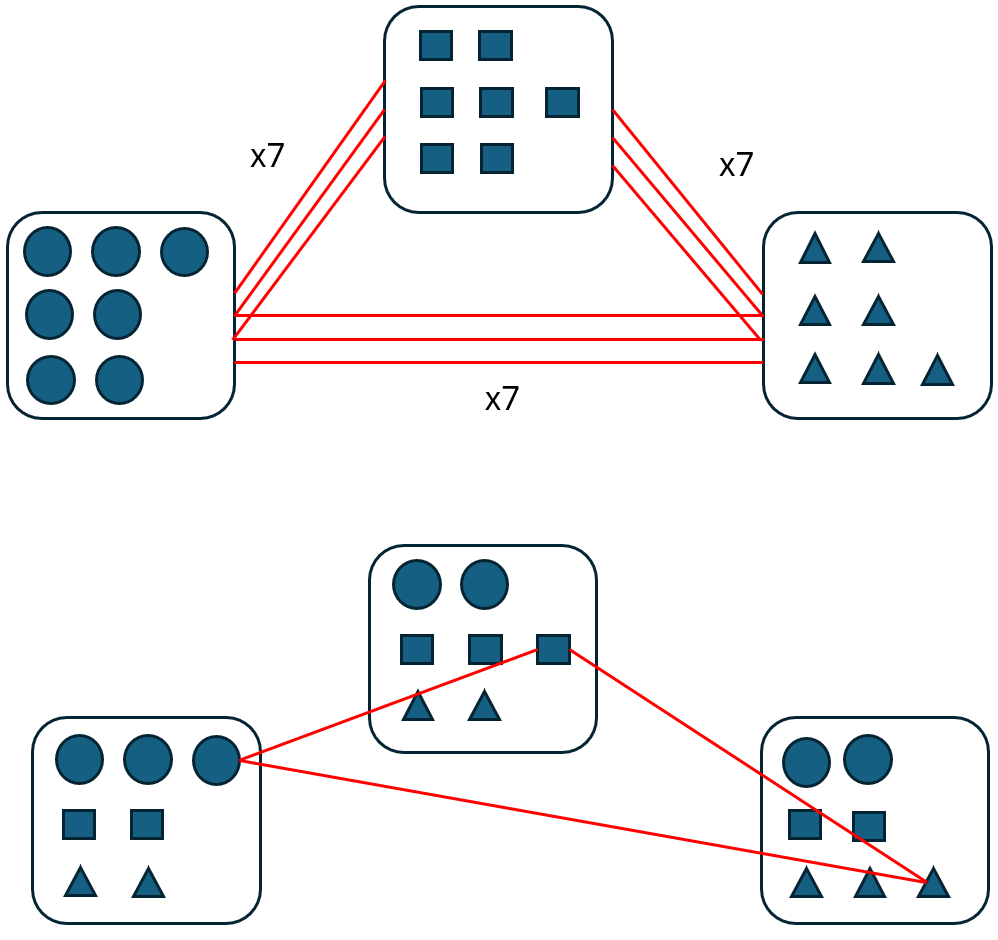}
\caption{Example non-uniform code block allocation. The different shapes denote three logical qubits, each encoded using a 7-qubit code. The red links highlight the processor-non-local gates required for transversal logical gate executions. Through DQEC allocation, the number of non-local gates is reduced from 21 to 3, in agreement with Equation \ref{eqn: pnl}.}
\label{fig:pnl}
\end{figure}

It is important to note that we can expect to allow many logical gates between each decoding step ($d_\text{circuit}>1$) for sufficiently low physical error rates when using DAQEC because errors do not propagate beyond one processor, i.e., in the worst case a single propagating error acts like a catastrophic event in the distributed case analyzed in~\cite{jiang-cosmic-ray-errors}.

Approximate QECCs like the W-state code which admit universally transversal gate sets therefore offer a path to distributed quantum computation without coupling the modules outside of QEC encoding and decoding, the costs of which may be amortized over many logical operations. 
The circumvention of the Eastin-Knill theorem therefore has the potential to translate to significant overhead and complexity reductions in a multi-QPU setting, without the need for magic state injection or code switching which are required by existing methods (see Section~\ref{existing-work}).

\subsection{Collective Fault Tolerance}

\begin{figure*}
    \centering
    \includegraphics[width=\textwidth]{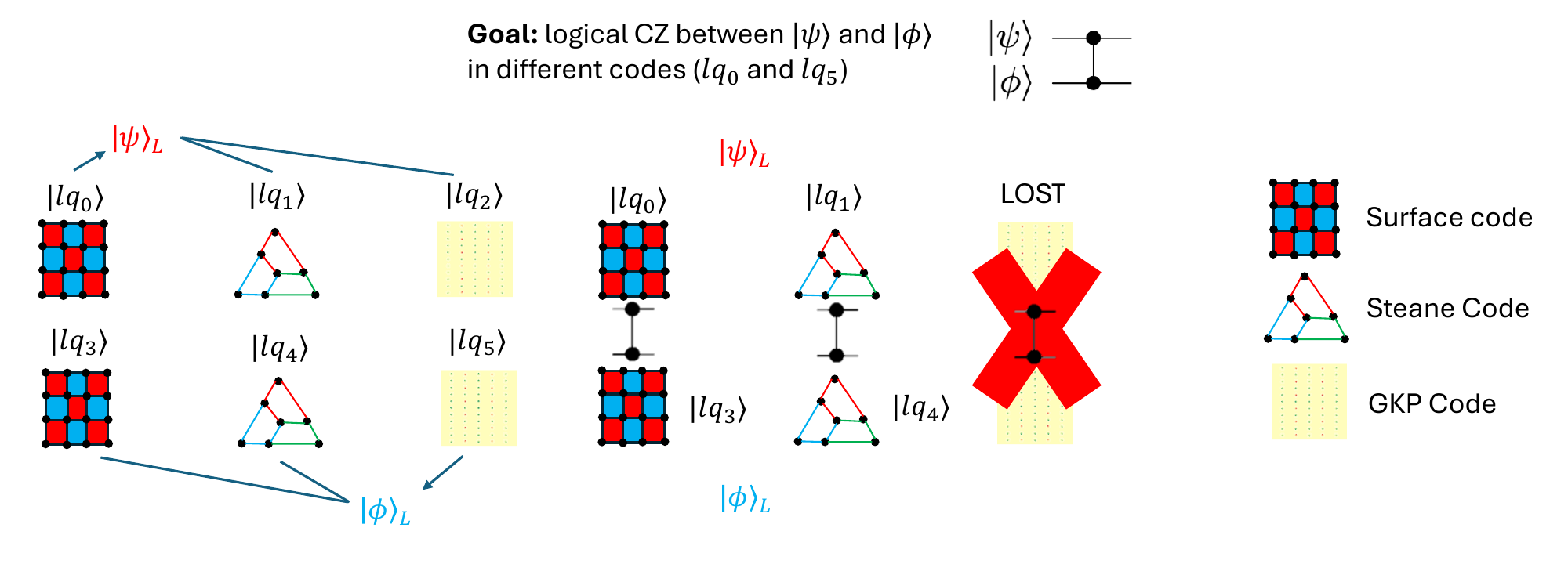}
    \caption{Example of collective fault tolerance in two logical qubits $\ket{\phi}_L$ and $\ket{\psi}_L$, each consisting of three heterogeneous codes concatenated with an approximate code. This concatenation scheme allows logical operations to be performed only amongst like codes, enabling distributed fault tolerance without high-overhead distributed protocols like magic state injection. If one code is erased, we can still recover with some probability using the approximate code.}
\label{fig:collective}
\end{figure*}

Codes that admit a universal transversal gate set enable a unique property in distributed quantum architectures that we call \emph{collective fault tolerance}.
This class includes approximate codes such as the W-state code as well as certain infinite-dimensional codes~\cite{GKP} and other constructions that evade the Eastin-Knill theorem~\cite{universal-transversal-gates}.
In such codes, all single- and two-qubit operations are transversal, meaning that any operation implemented fault-tolerantly within the constituent code blocks remains fault-tolerant when concatenated into an outer code.
This contrasts with conventional approaches to distributed fault tolerance, which require additional protocols such as distributed magic state injection.

This property is especially noteworthy when the outer code (approximate or otherwise) is concatenated with a set of heterogeneous inner codes -- for example, a surface code on one processor, a Steane code on another, and a Gottesman-Kitaev-Preskill (GKP) code on a third (see Figure~\ref{fig:collective}). 
In such a construction, logical operations on the distributed encoded qubits can be implemented fault-tolerantly by applying the appropriate local fault-tolerant operations, even if the underlying procedures differ.
For example, a non-Clifford gate might be realized via magic state distillation in the surface code~\cite{surface-msd}, code switching in the Steane code~\cite{steane-reedmuller-switching}, or with the standard GKP gadget in the GKP code~\cite{GKP}, and yet the concatenated qubit experiences the correct logical operation without any additional overhead.
This property also extends naturally to distributed settings where different processors, or distinct regions of a processor such as compute and memory zones~\cite{hetec}, employ different codes.
Because the collective logical operation is compiled from the constituent fault-tolerant operations, the overall system inherits fault tolerance from its parts.

A key observation is that this concatenated construction allows for the correction of certain errors that would be uncorrectable by either code alone, while allowing each processor-local code to operate independently.
To illustrate this concept, consider concatenating a stabilizer erasure-correcting code such as the $[[4,2,2]]$ code with a standard error correcting code such as the Steane code.
In this arrangement, the inner code can correct general errors, while the outer code can correct only erasures.
For a general $[[n,k,d]]$ code, the number of erasures $n_e$ and Pauli errors $n_p$ must satisfy $n_e+2n_p\leq d-1$ to be correctable.
A high-weight Pauli error on its own could cause a logical fault, but with physical error models such as those described in~\cite{quantum-circuit-inc}, the dominant logical error mode consists of a Pauli error coupled with an erasure that together exceed this threshold.
In this case, the presence of enough erasures signals that the error is unlikely to be corrected, and the Steane code block should be treated as erased by the outer code; this allows for correction of errors beyond the normal distance of the Steane code.
In regimes where Pauli errors dominate, one may instead employ a higher-distance code and adapt the decoding strategy so that errors are declared ``corrected'' only above some threshold of confidence.

The same principles apply when concatenating with a code admitting a universal transversal gate set: the distributed outer code can extend the effective error-correcting power of the system while maintaining locality of fault-tolerant gate implementations within each processor.
A full treatment of encoding and decoding strategies for this architecture is beyond the scope of this work; nonetheless, collective fault tolerance provides a mechanism for unifying heterogeneous codes into distributed logical qubits and enhancing the effective error resilience of the system as a whole.

\subsection{Advantage Under Noise Asymmetry}

Under approximate codes, distributed QEC maintains its advantage over local QEC in the face of spatially-correlated errors, in a similar manner to the case of exact codes discussed in Section~\ref{exact-distributed-spatially-correlated-errors}. We formally prove a lower bound on the magnitude of this advantage for the case of fully distributed codes against local codes in Theorem~\ref{thm:lower-bound} and validate this result numerically in Figure~\ref{fig:lower-bound}.

\begin{definition}[Fully Distributed QEC]
A QEC scheme in which each quantum processor contains no more than one subsystem assigned to any one code block.
\end{definition}

\begin{lemma}[Advantage of the Distributed W-state Code] \label{lem:ADQEC-adv}
Consider a modular platform consisting of $n$ processors each with $n$ qubits such that qubits on processor $p$ each admit an error with a probability $\varepsilon_p$. 
Suppose we assign $n$ W-state code blocks each of size $n$. Let $\varepsilon_{\text{dist}}$ be the probability of a logical fault across all blocks given a fully distributed assignment and let $\varepsilon_{\text{local}}$ be that under a local assignment.
Then $\varepsilon_{\text{dist}} \leq \varepsilon_{\text{local}}$.
\end{lemma}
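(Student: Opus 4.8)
The plan is to reduce the statement to the arithmetic--geometric mean (AM--GM) inequality by first computing the logical fault probability of a \emph{single} W-state block in each configuration, and then combining blocks. I would take as given the decoder analysis of Section~\ref{w-state-decoding}: a size-$n$ block on which $n_e$ of its subsystems are erased suffers a logical fault with probability $n_e/n$ (equivalently, the decoder returns a uniformly random one of the $n$ candidate locations for $\ket{\psi}$ and fails exactly when that location happens to be erased). The crucial structural point is that this conditional fault probability is \emph{linear} in $n_e$, so by linearity of expectation the unconditional fault probability of a block equals the expected fraction of its erased subsystems, irrespective of the joint distribution of the individual erasures.

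First I would evaluate this expected erasure fraction in the two assignments. Under the local assignment every subsystem of a given block sits on a single processor $p$, so each of its $n$ qubits is erased with probability $\varepsilon_p$ and the block's fault probability is therefore $\varepsilon_p$. Under the fully distributed assignment each block places exactly one subsystem on every processor, so its fault probability is $\bar\varepsilon := \frac{1}{n}\sum_{p=1}^n \varepsilon_p$, the mean error rate. I would then note that in both layouts the $n$ blocks occupy pairwise-disjoint sets of physical qubits -- distinct processors in the local case, and distinct qubit-slots of the shared processors in the distributed case -- so, since each qubit is erased independently (with a processor-dependent rate) and the block states form a tensor product, the block faults are mutually independent. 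Hence a logical fault somewhere among all blocks occurs with probability
\begin{equation}
\varepsilon_{\text{local}} = 1 - \prod_{p=1}^n (1-\varepsilon_p), \qquad \varepsilon_{\text{dist}} = 1 - (1-\bar\varepsilon)^n .
\end{equation}

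With these two expressions the result is immediate: $\varepsilon_{\text{dist}} \le \varepsilon_{\text{local}}$ is equivalent to $(1-\bar\varepsilon)^n \ge \prod_{p=1}^n (1-\varepsilon_p)$, which is exactly AM--GM applied to the nonnegative numbers $1-\varepsilon_1,\dots,1-\varepsilon_n$ (their geometric mean is at most their arithmetic mean $1-\bar\varepsilon$), with equality precisely when all $\varepsilon_p$ coincide. The only genuine work, and the step I expect to be the main obstacle, is the setup rather than the inequality: justifying that the per-block fault probability collapses to the expected erasure fraction (so the nonlinearity of $n_e/n$ never enters) and that the blocks can legitimately be treated as independent in the distributed layout despite sharing processors. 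Once that modeling is pinned down the concluding inequality is standard, and the equality condition pleasingly recovers the expectation that distribution yields a strict advantage exactly when the error landscape is non-uniform.
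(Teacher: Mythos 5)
Your proposal is correct and follows essentially the same route as the paper: the per-block success probability is the expected fraction of unerased subsystems (exploiting the linearity of the $n_e/n$ failure model), the blocks are independent so the overall success probabilities become $\prod_{p}(1-\varepsilon_p)$ versus $\bigl(\tfrac{1}{n}\sum_{p}(1-\varepsilon_p)\bigr)^{n}$, and AM--GM finishes the argument. Your write-up is in fact slightly more careful than the paper's, since you explicitly justify why the conditional fault probability collapses to the expected erasure fraction and why the blocks may be treated as independent.
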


\begin{proof}
Let the random variable $D_i$ be 1 if code block $i$ is decoded successfully and 0 otherwise. Similarly, let $D=\min(D_1,\dots,D_n)$ be 1 if all code blocks are decoded successfully and 0 otherwise. The probability of successfully decoding all blocks is then $\Pr[\text{decoding}] = \mathbb{E}[D] = \prod_{i=1}^n \mathbb{E}[D_i]$.

Let $X_i^p$ be 0 if qubit $i$ on processor $p$ experiences an error and 1 otherwise, and define $x_p:=1-\varepsilon_p$ to be the probability of no error for a qubit on processor $p$.  From Section~\ref{w-state-code} we deduce that the probability of successful decoding in the local and distributed cases are:
\begin{align}
\begin{split}
\mathbb{E}\left[D^\text{local}\right]
&= \prod_{p=1}^n \mathbb{E}\left[\frac1n \sum_{i=1}^n X_i^p\right] 
%= \prod_{p=1}^n \frac1n \sum_{i=1}^n \mathbb{E}\left[X_i^p\right] 
= \prod_{p=1}^n \frac1n \sum_{i=1}^n x_p \\
&= \prod_{p=1}^n x_p
\label{eqn:d-loc}
\end{split}
\\
\begin{split}
\mathbb{E}\left[D^\text{dist}\right]
&= \prod_{i=1}^n \mathbb{E}\left[\frac1n \sum_{p=1}^n X_i^p\right]
= \prod_{i=1}^n \frac1n \sum_{p=1}^n x_p \\
&= \left(\frac1n \sum_{p=1}^n x_p\right)^n
\label{eqn:d-dist}
\end{split}
\end{align}

By the AM–GM inequality, $\mathbb{E}\left[D^\text{dist}\right] \geq \mathbb{E}\left[D^\text{local}\right]$, hence $\varepsilon_{\text{dist}} \leq \varepsilon_{\text{local}}$.

\end{proof}

\begin{lemma} \label{lem:nth-root}
For $a\geq b>0$,
\begin{equation*}
a-b \geq n\sqrt[n]{b^{n-1}}\left(\sqrt[n]{a} - \sqrt[n]{b} \right)
\end{equation*}
\end{lemma}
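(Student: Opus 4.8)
The plan is to strip away the nested radicals by substituting $x := \sqrt[n]{a}$ and $y := \sqrt[n]{b}$. Since $a \geq b > 0$ forces $x \geq y > 0$, and since $\sqrt[n]{b^{n-1}} = b^{(n-1)/n} = y^{n-1}$, the statement to be proved becomes the purely polynomial inequality
\[
x^n - y^n \geq n\,y^{n-1}(x - y), \qquad x \geq y > 0.
\]
This reformulation is the whole point: it exposes the algebraic structure and reduces the claim to a one-variable-ratio fact that is easy to control.

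First I would invoke the elementary factorization
\[
x^n - y^n = (x - y)\sum_{k=0}^{n-1} x^{\,n-1-k}\, y^{\,k}.
\]
Because $x - y \geq 0$, it suffices to bound the sum from below. For each $k \in \{0,\dots,n-1\}$, monotonicity $x \geq y > 0$ gives $x^{\,n-1-k} y^{\,k} \geq y^{\,n-1-k} y^{\,k} = y^{\,n-1}$, so summing the $n$ identical lower bounds yields $\sum_{k=0}^{n-1} x^{\,n-1-k} y^{\,k} \geq n\,y^{\,n-1}$. Multiplying by the nonnegative factor $x - y$ gives $x^n - y^n \geq n\,y^{\,n-1}(x-y)$, and undoing the substitution recovers the statement exactly.

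An equivalent route I would keep in reserve uses the concavity of $t \mapsto t^{1/n}$ on $(0,\infty)$ directly. Since a concave function lies below each of its tangent lines, the supporting-line inequality at $b$ reads $\sqrt[n]{a} \leq \sqrt[n]{b} + \tfrac{1}{n} b^{(1-n)/n}(a - b)$. Noting that $\tfrac{1}{n} b^{(1-n)/n} = \tfrac{1}{n\sqrt[n]{b^{n-1}}}$ and rearranging immediately produces the claim, giving a near one-line derivation once the derivative is computed.

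There is no genuine obstacle here; the only step requiring care is the term-by-term lower bound in the factorization, which relies crucially on $x \geq y$ (equivalently $a \geq b$). The hypothesis cannot be dropped, and equality holds throughout precisely when $a = b$. I would also remark that this inequality is exactly the AM--GM-type gap needed to turn the qualitative comparison of Lemma~\ref{lem:ADQEC-adv} into a quantitative lower bound on the distributed advantage, which is presumably why it is isolated as a standalone lemma.
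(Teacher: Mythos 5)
Your primary argument is correct and is essentially the same as the paper's: both rest on the difference-of-powers factorization $x^n-y^n=(x-y)\sum_{k=0}^{n-1}x^{n-1-k}y^k$ followed by the term-by-term bound $x^{n-1-k}y^k\geq y^{n-1}$; your substitution $x=\sqrt[n]{a}$, $y=\sqrt[n]{b}$ is only a cosmetic repackaging of the paper's radicals. The alternative concavity/tangent-line route you mention in passing is a valid variant, but the main proof matches the paper's.
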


\begin{proof}
\begin{align*}
a-b &= \left(\sqrt[n]{a} - \sqrt[n]{b} \right) \sum_{j=0}^{n-1} \sqrt[n]{a^{n-j-1}b^j} \\
&\geq \left(\sqrt[n]{a} - \sqrt[n]{b} \right) \cdot n\sqrt[n]{b^{n-1}}
\end{align*}
where the first equality follows by difference of powers.
\end{proof}

\begin{theorem}[Lower Bound on Distributed W-state Code Advantage] \label{thm:lower-bound}
Consider the $n$ processor setup of Lemma~\ref{lem:ADQEC-adv} and define $\varepsilon_{\text{dist}}, \varepsilon_{\text{local}}$ as above. Let $\sigma^2$ be the variance in the error probabilities of the processors. Then for low processor error rates, we can approximately lower bound the advantage of a fully distributed QEC scheme by 
\begin{equation}
\varepsilon_{\text{local}}-\varepsilon_{\text{dist}} \gtrapprox \frac{n\sigma^2}{2}
\end{equation}
\end{theorem}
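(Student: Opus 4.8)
The plan is to write the advantage directly in terms of the arithmetic and geometric means of the per-processor survival probabilities $x_p := 1 - \varepsilon_p$, and then to squeeze it using Lemma~\ref{lem:nth-root}. Writing $\bar{x} := \frac{1}{n}\sum_{p=1}^n x_p$ for the arithmetic mean and $g := \left(\prod_{p=1}^n x_p\right)^{1/n}$ for the geometric mean, the decoding-success probabilities computed in Lemma~\ref{lem:ADQEC-adv} give
\begin{equation*}
\varepsilon_{\text{local}} - \varepsilon_{\text{dist}} = \mathbb{E}\!\left[D^{\text{dist}}\right] - \mathbb{E}\!\left[D^{\text{local}}\right] = \bar{x}^{\,n} - \prod_{p=1}^n x_p = \bar{x}^{\,n} - g^{\,n}.
\end{equation*}
Since $\sigma^2$ is invariant under the shift $\varepsilon_p \mapsto 1 - \varepsilon_p$, it is equally the variance of the $x_p$, which is what the later steps will reference.

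First I would apply Lemma~\ref{lem:nth-root} with $a = \bar{x}^{\,n}$ and $b = g^{\,n}$; the hypothesis $a \geq b > 0$ holds by the AM--GM inequality (already invoked in Lemma~\ref{lem:ADQEC-adv}) together with $x_p > 0$. Using $\sqrt[n]{a} = \bar{x}$, $\sqrt[n]{b} = g$, and $\sqrt[n]{b^{\,n-1}} = g^{\,n-1}$, the lemma yields the \emph{exact} (unconditional) lower bound
\begin{equation*}
\varepsilon_{\text{local}} - \varepsilon_{\text{dist}} \;\geq\; n\, g^{\,n-1}\,(\bar{x} - g),
\end{equation*}
which reduces the problem to two independent estimates: the arithmetic--geometric gap $\bar{x} - g$, and the prefactor $n\,g^{\,n-1}$.

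Next I would relate the gap to the variance by a second-order expansion. Writing $x_p = \bar{x}(1 + \delta_p)$ with $\sum_p \delta_p = 0$ and $\frac{1}{n}\sum_p \delta_p^2 = \sigma^2/\bar{x}^2$, and expanding $\ln g = \frac{1}{n}\sum_p \ln x_p = \ln\bar{x} + \frac{1}{n}\sum_p\left(\delta_p - \tfrac12\delta_p^2 + \cdots\right)$, the linear term vanishes and the quadratic term gives $\ln g \approx \ln\bar{x} - \sigma^2/(2\bar{x}^2)$, so that $\bar{x} - g \approx \sigma^2/(2\bar{x})$. In the low-error regime (more precisely $n\bar{\varepsilon} \ll 1$, where $\bar{\varepsilon} := \frac1n\sum_p \varepsilon_p$) we have $\bar{x} \to 1$ and $g \to 1$, hence $g^{\,n-1} \to 1$, and substituting the gap estimate collapses the bound to
\begin{equation*}
\varepsilon_{\text{local}} - \varepsilon_{\text{dist}} \;\gtrapprox\; \frac{n\,g^{\,n-1}}{\bar{x}}\cdot\frac{\sigma^2}{2} \;\longrightarrow\; \frac{n\sigma^2}{2},
\end{equation*}
which is the claimed bound. (As a consistency check, a direct second-order expansion of $\bar{x}^{\,n}-\prod_p(1-\varepsilon_p)$ in the $\varepsilon_p$ reproduces $\tfrac{n\sigma^2}{2}$ exactly, confirming the estimate is tight at leading order.)

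The delicacy lives entirely in the two approximation steps and in pinning down their common regime of validity --- which is exactly why the statement is phrased with $\gtrapprox$ rather than $\geq$. The inequality from Lemma~\ref{lem:nth-root} is rigorous, but the collapse of $n\,g^{\,n-1}/\bar{x}$ to $n$ requires the geometric mean to remain near $1$, i.e. $n\bar{\varepsilon} \ll 1$; if $n\bar{\varepsilon}$ is of order one then $g^{\,n-1}$ decays and the bound degrades by precisely that factor. The AM--GM-to-variance expansion is standard, but one must verify that its neglected contributions are cubic in the $\delta_p$ (of order $\sigma^3/\bar{x}^3$) and therefore subleading in the same regime. Thus the main obstacle is not any single hard inequality but certifying that both approximations are controlled simultaneously in the physically relevant low-error-rate window.
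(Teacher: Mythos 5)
Your proposal is correct and follows essentially the same route as the paper: the identity $\varepsilon_{\text{local}}-\varepsilon_{\text{dist}}=\mathbb{E}[D^{\text{dist}}]-\mathbb{E}[D^{\text{local}}]=\bar{x}^{\,n}-g^{\,n}$, Lemma~\ref{lem:nth-root} applied with $a=\bar{x}^{\,n}$ and $b=g^{\,n}$ (identically $1-\varepsilon_{\text{dist}}$ and $1-\varepsilon_{\text{local}}$), the $\mathrm{AM}-\mathrm{GM}\approx\sigma^2/2$ approximation, and the low-error-rate limit in which the prefactor tends to $n$. Your explicit second-order expansion justifying the AM--GM gap and your tracking of the $1/\bar{x}$ and $g^{\,n-1}$ factors are slightly more careful than the paper's presentation but do not change the argument.
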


\begin{proof}
By Lemma~\ref{lem:ADQEC-adv}, we know that $\varepsilon_{\text{dist}} \leq \varepsilon_{\text{local}}$. Then by Lemma~\ref{lem:nth-root}, we can write
\begin{align*}
    \varepsilon_{\text{local}}-\varepsilon_{\text{dist}}
    &= (1-\varepsilon_{\text{dist}}) - (1-\varepsilon_{\text{local}}) \\ 
    &\geq n(1-\varepsilon_{\text{local}})^{\frac{n-1}{n}} \left[(1-\varepsilon_{\text{dist}})^{\frac1n} - (1-\varepsilon_{\text{local}})^{\frac1n}\right]
\end{align*}

Observing that $1-\varepsilon_x = \mathbb{E}\left[D^x\right]$, we can use Equations~\ref{eqn:d-loc} and~\ref{eqn:d-dist} and a well-known approximation for the difference of arithmetic and geometric means ($\text{AM}-\text{GM} \approx\sigma^2/2$)~\cite{AM-GM-bound} to derive
\begin{align}
    \varepsilon_{\text{local}}-\varepsilon_{\text{dist}}
    &\geq n(1-\varepsilon_{\text{local}})^{\frac{n-1}{n}} \left[\mathbb{E}\left[D^\text{dist}\right]^{\frac1n} - \mathbb{E}\left[D^\text{local}\right]^{\frac1n}\right] \nonumber \\
    &\approx n(1-\varepsilon_{\text{local}})^{\frac{n-1}{n}} \frac{\sigma^2}{2} \nonumber \\
    &\geq n(1-\varepsilon_{\text{local}}) \frac{\sigma^2}{2} \label{eqn:bound-no-approx}
\end{align}

For low processor error rates, $1-\varepsilon_{\text{local}} \approx 1$ giving our stated bound.

\end{proof}

This inequality shows that the benefit of distributed error correction is proportional to the variance in processor error rates and the size of the system.
We numerically validate our lower bound in Figure~\ref{fig:lower-bound} for $n\in\{3,7,20\}$. Results show that the derived bound in Equation~\ref{eqn:bound-no-approx} is tight and the $1-\varepsilon_\text{local}\approx1$ approximation is valid for low physical error rates.

\begin{figure}
    \includegraphics[width=\columnwidth]{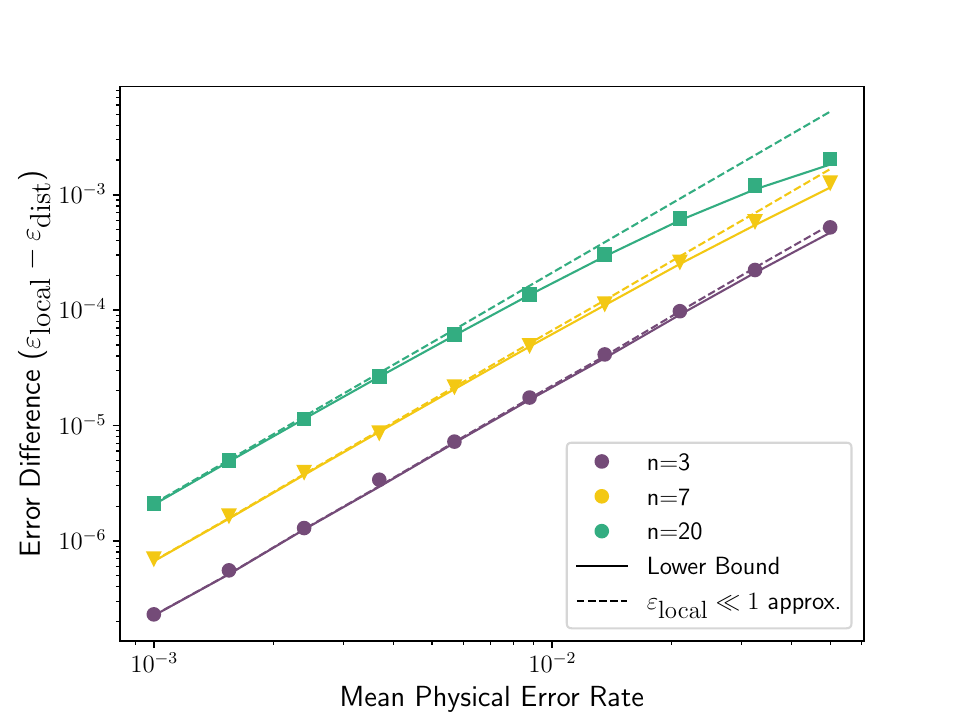}
    \caption{Numerical validation of Theorem~\ref{thm:lower-bound}. We simulate the relative performance of fully distributed and local QEC for $n$ processors of $n$ qubits for $n\in\{3,7,20\}$. The error difference between the approaches closely matches the bound in Equation~\ref{eqn:bound-no-approx} (solid lines), which is well approximated by $n\sigma^2/2$ (dashed lines) for low physical error rates.}
     \label{fig:lower-bound}
\end{figure}

Though this precise bound holds only for fully and uniformly distributed schemes with uniform error rates on each processor, the same principles extend to general spatially correlated errors, with advantage going to schemes which distribute errors across different code blocks. Sources of clustered errors include qubit crosstalk and cosmic ray events.

The advantage we derive here is not as large as that of the exact case (see Figure~\ref{dqec-vs-lqec}).
Intuitively, this is because error correction under exact codes is binary --- either there are sufficient physical errors to induce a logical fault, or there are not --- allowing for greater separation, whereas approximate codes have a probability of recovery that is linear in the number of errors, so the achieved separation is smaller.

%Accordingly, the loss threshold required to realize distributed advantage for physically relevant error rates is higher than in the exact case (\mytilde98\% vs \mytilde90\%) and there is not a significant advantage in logical error probability at those error rates; for more details, see Appendix \ref{sec:appendix}. It is worth noting that if realization efficiency is discounted, the distributed performance is still strictly better or equal to local performance at any error rate, and thus for circuits of sufficient depth a practical advantage is still realizable.

%\connor{It would be nice to generalize this result. Maybe to say that higher variance of error rates within code blocks (as opposed to between code blocks) is advantageous. And then argue that this generally results in a distributed code.}

\subsection{Realization}

To interconnect distributed QPUs in the manner described here, the first step is to prepare an approximate code state across those processors.
Techniques such as the W-state interconnects in~\cite{Lu} are suitable when qubit frequency mismatches between processors are small; for larger mismatches, frequency conversion may be required.
The QPUs can then embed the relevant qubits into a larger error correcting code of any type. 

One notable requirement of the W-state code in particular is that it is composed of qutrit subsystems.
However, this constraint is modest: in our construction only one physical subsystem from each local error correcting code must occupy a qutrit subspace, and all operations outside of W-state decoding may be performed in qubit subspaces.
Though not commonly used, qutrit degrees of freedom are available in superconducting circuits~\cite{scqt}, trapped ions~\cite{itqt}, and neutral atom processors~\cite{naqt}, and therefore present a feasible path to implementation.
This approach extends existing work~\cite{designation} supporting the use of dedicated communication qubits.

\section{Discussion and Future Work} \label{discussion}

In this work, we have explored the principles of DAQEC using the W-state code as a minimal yet illustrative example, providing a simple platform for demonstrating how universal transversality can more robustly tolerate noisy interconnects and enable collective fault tolerance in distributed architectures.
However, the W-state code represents only one instance within a broader class of codes that admit universal transversal gate sets.
These include infinite-dimensional codes~\cite{reference-frame-qec}, finite-dimensional approximate codes~\cite{finite-approx-transversal}, and multi-spin codes~\cite{universal-transversal-gates}. % Dicke state code??

Such codes offer desirable properties that surpass those of the W-state code, including higher encoding rates, improved scaling of error-correction infidelity, and known fault-tolerant constructions. 
Higher encoding rates allow these more advanced codes to use fewer physical qubits, while enhanced error resistance may enable them to tolerate greater interconnect losses.
Together, these features suggest that such codes could form the foundation of related schemes that are more compatible with near-term hardware.
Their greater complexity, however, places their analysis beyond the scope of this work.
%Finally, as mentioned in the main text, we leave an examination of fault-tolerant encoding and decoding (i.e. fault-tolerant constructions for controlled-on-0 not gates and qudit CSWAP gates) for future work.

The W-state code itself may also be generalized beyond the constructions considered in this work. While we focus on encoding qubit systems, the W-state code can in general encode qudits into qu$(d+1)$its, which may be desirable in some settings due to more efficient fault-tolerant non-Clifford gates \cite{campbell}, or better overall performance after concatenation. We leave an exploration of these ideas to future work.
%Finally, our construction treats methods of encoding qubits into this code. However, the W-state code can in general encode qudits into qu$(d+1)$its, which may have advantages in quantum computation such as more efficient fault-tolerant non-Clifford gates \cite{campbell}, providing better overall performance after concatenation. We leave an exploration of these ideas to a future work.

The results presented here establish the foundational concepts and potential advantages of DAQEC based on the transversality provided by approximate and other Eastin-Knill-evading codes.
Despite our focus on the simplest member of this class, the W-state code, which is suboptimal in both encoding rate and error-correcting capability and lacks a known fault-tolerant realization, we have shown that DAQEC can reduce non-local gate counts, mitigate spatially correlated errors such as crosstalk, and enable collective fault tolerance across distributed quantum processors. These represent substantial advances toward practical distributed quantum computation.
Future extensions of these ideas to more sophisticated codes promise to further enhance their utility and practicality.
\\ \:

\section{Acknowledgments}
The authors would like to thank Lane Gunderman, Zachary Guralnik, Rhea Alexander, Todd Brun, Yingkai Ouyang, Chris Gerhard, and Ching-Yi Lai for helpful discussions. We thank Leidos for funding this research through the Office of Technology. Approved for public
release 25-LEIDOS-0917-30170.
\bibliography{ref}

\appendix

\section{The ``Bad Apple'' Problem}
\label{sec:appendix}
\subsection{Proof that Optimal Packages are Fully Distributed}
Our model for distributed quantum error correction has a simple analogue in terms of apples. In the simplest case, suppose there are 3 bins each containing 3 apples. The first bin has apples which are 1 month old, the second has apples which are 1 week old, and the third has apples which are 1 day old. The probability that an apple taken from a given bin is expired is given by $p_m>p_w>p_d$ respectively. A crafty salesman wishes to divide these 9 apples into 3 barrels, each containing 3 apples, which he will then ship to customers. However, he wants to maximize the chance that no barrel has more than 1 expired apple --- in this case, TWO bad apples ruin the whole barrel (Figure ~\ref{apples}). Intuitively, he should pack the barrels with 1 apple from each bin; we can formalize this by considering the probability that the barrel is ruined:
\begin{equation}
    \begin{aligned}
        p_{\text{ruin}}&=p_1p_2(1-p_3)+p_1p_3(1-p_2)+p_2p_3(1-p_1)+p_1p_2p_3\\&=p_1p_2+p_1p_3+p_2p_3-2p_1p_2p_3\\&=\prod_{i=1}^3p_i\left(1+\sum_{j=1}^3\frac{1-p_{j}}{p_{j}}\right)
    \end{aligned}
\end{equation}
where $p_i$ is the probability that the $i$-th apple of the barrel is ruined. Similarly, the overall function we wish to maximize is $\prod_j(1-p_{j, \text{ruin}})$, i.e. the probability that none of the barrels are ruined, given the fixed set of $p$ values corresponding to our apples. More explicitly, we have
\begin{equation}
p_{\text{no bad barrels}}=\prod_k \left(\prod_i(1-p_{i,k})\left(1+\sum_j\frac{p_{j,k}}{1-p_{j,k}}\right)\right)
\end{equation}
\begin{equation}
=\prod_{i,k}(1-p_{i,k})\prod_{k}\left(1+\sum_j\frac{p_{j,k}}{1-p_{j,k}}\right)
\end{equation}
and the first product is fixed for any assignment of apples to barrels, since it is a product over all apples. So, we wish to maximize the second product only, subject to $\sum_{j,k} p_{j,k}=C$ or equivalently $\sum_{j,k}\frac{p_{j,k}}{1-p_{j,k}}=\sum_k F_k=C$. We use the method of Lagrange multipliers, optimizing over the logarithm:
\begin{equation}
\sum_k\log\left(1+\sum_j\frac{p_{j,k}}{1-p_{j,k}}\right)+\lambda\left(\sum_{j,k}\frac{p_{j,k}}{1-p_{j,k}}-C\right)=L
\end{equation}
\begin{equation}
\frac{\partial L}{\partial F_k}=\frac{1}{1+F_k}+\lambda=0\implies F_k=constant
\end{equation}
\begin{figure}
    \centering
    \includegraphics[width=\columnwidth]{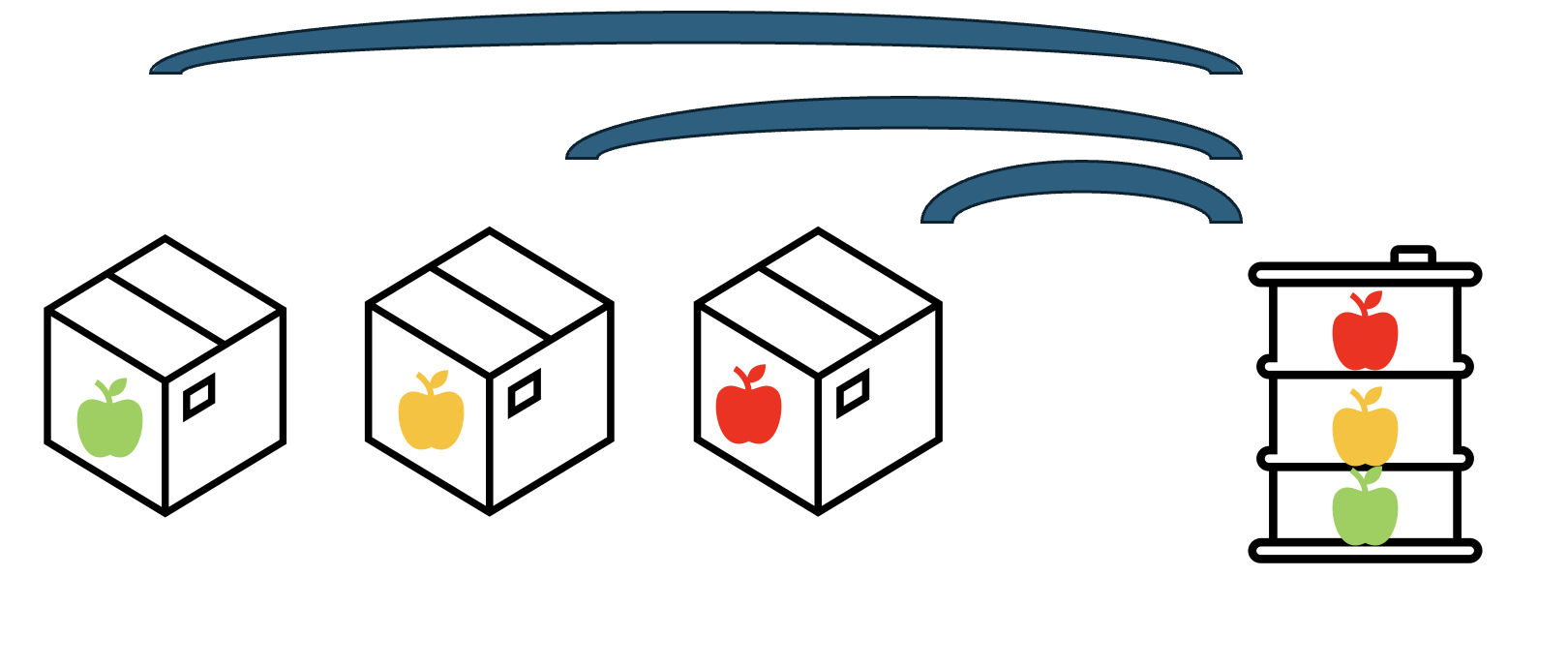}
    \caption{As the saying goes, one bad apple is...probably not enough to ruin the barrel if the other apples are really good.}
    \label{apples}
\end{figure}
and we see that $F_k$ has no $k$-dependence; therefore all the $F_k$ are equal and the barrel should be packed with one apple from each bin to minimize the probability of ruining any barrels. 

\subsection{Distribution Under Fixed Interconnect Loss Probability}

The above serves as an analogy for the simplest case, but suppose that when two apples from the same bin go into the same barrel, they can get contaminated in transit. This occurs with probability $p_c$, which is a constant (it does not vary based on which bins the apples came from). Then we must compare the probabilities of ruin between a heterogeneous barrel and a homogeneous barrel, each consisting of $n$ apples from $n$ bins. The latter is straightforward to calculate; the former is given by 
\begin{equation}
    p_{\text{ruin}}=1-(1-p_c)^{\binom{n}{2}}\left(\prod_{i=1}^n (1-p_i)\right)\left(1+\sum_{i=1}^n \frac{p_i}{1-p_i}\right)
\end{equation}
\begin{equation}
(1-p_\text{ruin})^n = \prod_i (1-(np_i(1-p_i)^{n-1}+(1-p_i)^{n}))
\end{equation}
\begin{equation}
    p_c^* \geq 1-\left(\frac{\left(\prod_i (1-(np_i(1-p_i)^{n-1}+(1-p_i)^{n}))\right)^{\frac{1}{n}}}{\left(\prod_i(1-p_i)\right)\left(1+\sum_i \frac{p_i}{1-p_i}\right)}\right)^{\frac{1}{\binom{n}{2}}}
\end{equation}
where we have considered all pairwise edges which may result in contamination, and this defines the cutoff threshold $p_c^*$ which can be evaluated numerically. For $p_m=.6$, $p_w=.2$, and $p_d=.05$, for example, the cutoff is $p_c^*\approx .07$.

Finally, we introduce the example corresponding to approximate QEC: suppose that rather than being binary, a barrel of size $n$ with $k$ expired apples is ruined with probability $k/n$ (which as an aside is a good metric for how bad a barrel of apples is, but defies intuition for error correcting code performance). If $p_i$ is the probability that apple $i$ is expired, then
\begin{equation}
p_{\text{ruin}}=1-\frac{(1-p_c)^{n-1}}{n}\sum_{i=1}^n (1-p_i)
\end{equation}  
for each barrel. We can then set the success probability in the fully contaminated case equal to that of the homogenous case:
\begin{equation}
(1-p_{\text{ruin}})^n=\prod_{i=1}^n (1-p_i)
\end{equation}
\begin{equation}
p_c^*\geq1-\left(\frac{n\left(\prod_i(1-p_i)\right)^{\frac{1}{n}}}{\sum_i(1-p_i)}\right)^{\frac{1}{n-1}}
\end{equation}
which gives the value of the cutoff probability assuming all links are utilized equally (maximally) in the creation of the code. More realistically, there will be uneven utilization, and so $p_\text{ruin}$ will be lower than anticipated. %\connor{Why does this not apply to the homogeneous case too? Because there is no link utilization in the homogeneous case}. 
This will make $p_c^*$ larger; therefore the above expression serves as a lower bound to the actual cutoff, and also serves as such in the exact case as well.

\end{document}